
\documentclass[a4paper, 11pt]{article}

\usepackage{calc, amsmath, amsthm, a4, latexsym, amssymb, color, url}
\usepackage{graphicx}
\DeclareGraphicsRule{*}{mps}{*}{}
\usepackage{psfrag}

\psfrag{V}{\small $V$}
\psfrag{U}{\small $U$}
\psfrag{Z}{\small $Z$}

\setlength{\parindent}{0pt}
\setlength{\parskip}{1ex plus 0.5ex minus 0.2ex}
\addtolength{\textwidth}{1cm}
\setlength{\oddsidemargin}{(\paperwidth - \textwidth)/2 - 72pt}

\definecolor{comcolor}{rgb}{0.9,0.3,0.3}
\definecolor{starcolor}{rgb}{0.3,0.3,0.9}
\definecolor{hscolor}{rgb}{0.9,0.6,0.5}

\newtheorem{thm}{Theorem}[section]
\newtheorem{lemma}[thm]{Lemma}
\newtheorem{corollary}[thm]{Corollary}
\newtheorem{prop}[thm]{Proposition}

\theoremstyle{definition}
\newtheorem{defn}[thm]{Definition}

\newtheorem{rem}[thm]{Remark}

\newcommand{\be}{\begin{equation}}
\newcommand{\ee}{\end{equation}}
\newcommand{\ba}{\begin{array}}
\newcommand{\ea}{\end{array}}
\newcommand{\bal}{\begin{aligned}}
\newcommand{\eal}{\end{aligned}}


\newcommand{\R}{\mathbb{R}}

\newcommand{\N}{\mathbb{N}}

\newcommand{\E}{\mathbb{E}}

\renewcommand{\P}{\mathbb{P}}

\newcommand{\1}{1\hspace{-0.098cm}\mathrm{l}}





\newcommand{\dd}{{\text{d}}}

\newcommand{\bz}{{\overline{z}}}
\newcommand{\br}{{\overline{r}}}
\newcommand{\bc}{{\overline{c}}}
\newcommand{\bZ}{{\overline{Z}}}
\newcommand{\bmu}{{\overline{\mu}}}
\newcommand{\bLambda}{{\overline{\Lambda}}}

\DeclareMathOperator{\Bin}{Bin}
\DeclareMathOperator{\Hyp}{Hyp}


\begin{document}

\title{The seed bank coalescent with simultaneous switching}

\author{Jochen Blath\footnote{TU Berlin, blath@math.tu-berlin.de}, Adri\'an Gonz\'alez Casanova\footnote{UNAM, adriangcs@matem.unam.mx}, Noemi Kurt\footnote{TU Berlin, kurt@math.tu-berlin.de}, Maite Wilke-Berenguer\footnote{Ruhr-Universit\"at Bochum, maite.wilkeberenguer@rub.de}} 

\maketitle

\begin{abstract}
We introduce a new Wright-Fisher type model for seed banks incorporating ``simultaneous switching'', which is motivated by recent work on microbial dormancy (\cite{LJ11}, \cite{SL18}). We show that the simultaneous switching mechanism leads to a new jump-diffusion limit for the scaled frequency processes, extending the classical Wright-Fisher and seed bank diffusion limits. We further establish a new dual coalescent structure with multiple activation and deactivation events of lineages. While this seems reminiscent of multiple merger events in general exchangeable coalescents, it actually leads to an entirely new class of coalescent processes with unique qualitative and quantitative behaviour. To illustrate this, we provide a novel kind of condition for coming down from infinity for these coalescents 
using recent results of Griffiths \cite{G14}.
\end{abstract}

\section*{Introduction}

The evolutionary consequences of dormancy resp.\ the presence of a seed bank in a population are currently an active topic both in the biologically as well as the mathematically oriented population genetics communities (e.g.\ \cite{KKL01}, \cite{V04}, \cite{T11}, \cite{BGCKS12}, \cite{BGCKW16}, \cite{DHP16}, \cite{MKTZ17}, \cite{SL18}). Indeed, seed banks are believed to strongly affect the interplay of classical evolutionary forces such as genetic drift, selection and migration; and mathematical (toy-) models and inference tools for seed banks are currently being developed (\cite{BBKWB18+}). We refer to \cite{SL18} for a comprehensive overview and many further references. However, at present there seems to be a whole range of more or less natural ways to model a seed bank, and different models predict different qualitative behaviour (e.g.\ ``weak'' vs.\ ``strong'' seed banks, cf.\ \cite{BGCKW16}, \cite{KKL01}, \cite{ZT12}). Moreover, for several important scenarios, adequate mathematical models are still missing entirely.

In \cite{LJ11}, Lennon and Jones discuss various biological mechanisms (with a focus on microbial species) that lead to the initiation of dormancy and the resuscitation of dormant organisms. In particular, they distinguish between {\em spontaneous switching} and {\em simultaneous switching}, where the first mechanism describes the spontaneous initiation of dormancy in a single microbe, independent of the state of rest of the population, while the latter describes the simultaneous initiation of dormancy in a whole fraction of the population, say in response to an environmental cue (such as changes in temperature, or availability of resources). This mechanism is thus also known as responsive switching. A similar distinction can be made for the resuscitation from a dormant state (individually vs.\ simultaneously due to a trigger event).

The first mechnism, {\em spontaneous switching} has been incorporated in \cite{BEGCKW15}, \cite{BGCKW16} into a population model related to Wright's two island model (\cite{W51}, \cite{KZH08}), where the islands correspond to the active and the dormant sub-population (with the distinguishing  feature that reproduction is blocked in the dormant part). Here, spontaneous switching events correspond to what one would traditionally call migration between the two sub-populations. Yet, rather surprisingly, there are several qualitative and quantitative differences between the resulting seed bank diffusion limit and the classical two island diffusion, see e.g.\ \cite{BBGCWB18+}. Both models have an interesting ancestral dual process, namely the seed bank coalescent (see \cite{BEGCKW15} and also \cite{LM15} for a similar structure arising in the context of peripatric speciation models) and the well-known structured coalescent (cf.\ e.g.\ \cite{H94, T88, N90}). While the structured coalescent is well-established, the seed bank coalescent is new and still under investigation, and inference tools are currently being developed (\cite{BBKWB18+}. 

However, {\em simultaneous switching} seems to have not been incorporated in Wright-Fisher type seed bank models so far, although it appers to be an important mechanism for seed bank dynamics (\cite{LJ11}). It is the purpose of this paper to provide a first (toy-)model for this scenario and to analyse its scaling limit and dual ancestral process. We will see below that the resulting coalescent process, called the seed bank coalescent with simultaneous switching, is a new mathematical object with unique properties. As in the classical seed bank coalescent, lines can be either active or dormant, and the coalescence dynamics regarding the active lines are similar to a Kingman coalescent, while dormant lines are blocked from coalescence. However, lines can switch their status from active to dormant and vice versa simultaneously according to some driving Poisson measure, so that multiple lines can become active or dormant at a time.  This feature extends the individual switching of the seed bank coalescent and leads to new qualitative behaviour. The switching of multiple lines at the same time is reminiscent of multiple merger events in Lambda-coalescents (\cite{S99}, \cite{P99}, \cite{DK99}), yet leads to different tree structures, which is reflected in a new type of criterion for ``coming down from infinity'', interestingly involving arguments from rather elegant recent work by Griffiths \cite{G14}. 

The paper is organized as follows. In Section \ref{sn:forward_model}, we define two variants of seed bank models incorporating simultaneous switching and show that their corresponding allele frequency processes converge to a certain jump-diffusion limit (the seed bank diffusion with jumps), under a classical re-scaling similar to the Wright-Fisher model and the Wright Fisher diffusion. In Section \ref{sn:seedbank_coalescent}, we first define the seed bank coalescent with simultaneous switching and show that it is the moment dual to the seed bank diffusion with jumps, thus describing the ancestry of samples from this model. We will then discuss absorption probabilities and long-term behavior of the diffusion with the help of this dual, before investigating conditions for the coming down from infinity.

\section{The forward model and its scaling limit}
\label{sn:forward_model}

In this section we present forward in time population models with seed bank, allowing for spontaneous as well as simultaneous switching. We proceed in two steps, first presenting a model with a fixed fraction of individuals involved in simultaneous switching event, later generalizing to random numbers. The second model is a generalization of the first one, and most of our results will be stated for this general case. However, for simplicity of presentation, we start with the easier situation of fixed switching size. 

Consider a haploid population of fixed size $N$ of active individuals reproducing in discrete non-overlapping generations $k=0,1,...$ Assume that individuals carry a genetic type from some type-space $E$ (we will later pay special attention to the bi-allelic setup, say $E=\{a, A\}$, for the forward model). Further, assume that the population also sustains a \emph{seed bank} of constant size $M$, which consists of the dormant individuals. For simplicity, we will sometimes refer to the $N$ `active' individuals as `plants' and to the $M$ dormant individuals as `seeds' (even if they are typically microorganisms).

\subsection{Model A: Simultaneous switching of fixed size}
Fix $c, \bc >0$, which will describe the (small) number of individuals affected by spontaneous switching events, and fix $z,\bz\in [0,1]$ as parameters for the large simultaneous migration events. The model is then defined with the help of three types of events. For simplicity of notation, we assume first that $c, \bc zN$ and $\bz M$ are natural numbers (otherwise Gauss-brackets could to be introduced into the definition in a suitable manner). We assume that in each generation, reproduction is governed by one of the following three events:

\begin{itemize}
\item[S] \emph{Spontaneous switching (small-scale migration event of size $o(N)$) between active and dormant:} 
For the new active generation, $N-c$ active indiviuals are obtained by multinomial sampling from the previous active generation. The remaining $c$ active slots are filled by sampling (without replacement) $c$ types independently and uniformly from the seed bank types of the previous generation. 
For the new dormant generation, $M-\bc$ dormant individuals chosen uniformly at random simply stay in the seed bank, and the remaining slots are filled up by $\bc$ new ones via multinomial sampling from active individuals in the previous generation.

\item[F] \emph{Simultaneous switching (large-scale migration of size $O(N)$) from dormant to active, ``forest fire'':}
For the new active generation, $(1-z)N$ active individuals are obtained by multinomial sampling from the previous active generation.
The remaining $zN$ active slots are filled by sampling (without replacement) $zN$ types independently and uniformly from the seed bank types of the previous generation. The seed bank stays as it is.

\item[D] \emph{Simultaneous switching (large-scale migration of size $O(N)$) from dormant to active, ``drought'':}
The $N$ active individuals in the next generation are produced by multinomial sampling from the active individuals in the previous generation. 
For the new seed bank generation, $\bz M$ dormant individuals from the previous generation are replaced by new dormant individuals obtained by multinomial sampling from the previous active generation. The remaining $(1-\bz)M$ dormant individuals stay in the seed bank.
\end{itemize}

Thus, in each of the three cases, we have again $N$ active and $M$ dormant individuals in the next generation. This assumption of fixed population sizes is common in population genetics and in particular in Wright-Fisher type models. A situation in which fluctuations in population sizes are allowed will be investigated in future work. 

Note that in mechanism $F$ one needs to choose $z$ such that $zN\leq M.$ We denote by $R_k$ a random variable taking values in $\{S, F, D\}$ which determines the type of event that happens in generation $k.$ It is clear that in order to get a non-trivial scaling limit, large-scale migration events have to be rare, while small scale migration should be ``typical''. Here, the sequence $(R_k)_{k\in \N}$ will be chosen to be iid and independent of the previous random mechanisms, such that 
\begin{equation}
\label{eq:R_k_def}
\P(R_k=F)=O(1/N), \, \, \,  \P(R_k=D)=O(1/N), \, \mbox{ and } \, \P(R_k=S)=1-O(1/N).
\end{equation}
As a result, in the limit as $N\to\infty,$ simultaneous switching events can be expected to occur according to a Poisson process of finite rate. 

\begin{rem}
Our above model is a generalisation of the seed bank model from \cite{BGCKW16} by additionally introducing the simultaneous switching events. However, note that also the spontaneous switching mechanism was defined slightly differently in the above paper, where the event $S$ was replaced with the following: 
\begin{itemize}
\item[S'] \emph{Symmetric spontaneous switching:} 
For the new active generation, $N-c$ active individuals are obtained by multinomial sampling from the previous active generation. The remaining $c$ active slots are filled by sampling (without replacement) $c$ types independently and uniformly from the seed bank types of the previous generation. 
For the new dormant generation, precisely these $c$ types are replaced by $c$ new ones via  multinomial sampling from active individuals in the previous generation.
\end{itemize}

The advantage of working with $S$ instead of $S'$ is the fact that spontaneous migration from active to dormant and from dormant to active are now decoupled, and in particular, one may choose to have small migration events only in one direction (by setting either $c$ or $\bc$ equal to 0). Most of our results are true for both S and S', and the proofs immediate, by choosing $\bc=c$ in all the statements. This is due to the fact that in the limit $N\to\infty$ it doesn't matter if precisely the types that have been selected by multinomial sampling are being replaced themselves or not, which is the only difference.
\end{rem}

\subsection{Model B: Simultaneous switching of random size}
Model A can be extended to include large migration events of varying size. For $N\in\N,$ fix probability measures $\mu_N, \bmu_N$ on 
$$
I^N:=\Big\{0, \frac 1N, \frac 2N, \dots,\frac{N-1}{N}, 1\Big\} \quad \mbox{ resp. } \quad  I^M:=\Big\{0, \frac 1M, \frac 2M, \dots, \frac{M-1}{M}, 1\Big\}.
$$ 
In case $M<N$ choose $\mu_N$ such that $zN\leq M.$ Let $(Z_k)_{k\in\N}$ denote a sequence of iid random variables with distribution $\mu_N$ and $(\bZ_k)_{k\in\N}$ denote a sequence of iid random variables with distribution $\bmu_N$. 

Again, reproduction will be governed by three events $S, F, D$ as before, which are selected by a sequence of random variables $\{R_k\}$ in an iid fashion as before. The event $S$ is precisely the same as before, but the events $F$ and $D$ contain additional randomness.

Indeed, whenever $R_k=F,$ the fraction of dormant individuals becoming active in the F-event is given by the random number $Z_k,$ instead of the constant $z$, and whenever $R_k=D,$ the fraction of dormant individuals replaced by active offspring is given by the random number $\bZ_k$ (instead of $\bz$). Otherwise, the process is defined exactly the same as in model A.  Note that model A is contained in model B as a special case with the specific choices $\mu_N=\delta_z$ and $\bmu_N=\delta_{\bz}$ for some fixed $z,\bz$, and $(R_k)_{k\in \N}$ as in \eqref{eq:R_k_def}. However, the additional randomness in $F$ and $D$ may also require a different distribution of the $(R_k)_{k\in \N}$ in order to get a reasonable scaling limit. Below, we will give a condition jointly for the measures $\mu_N, \bmu_N,$ and the probabilities of occurrence of large events, which allow infinite rates for large migration events in the limit and still leads to a well-defined limiting model.

\subsection{The allele frequency processes}

From the above models, their allele frequency processes can be derived in the usual way.

\begin{defn}[Forward type configuration process]
\label{def:forward}
Fix population size $N \in \N$, seed bank size $M\in\N$, genetic type space $E$ and parameters as in the definition of the models A resp.\ B above. Given initial type configurations $\xi_0 \in E^{N}$ and $\eta_0 \in E^M$, denote by
$$
\xi_k:=\big(\xi_k(i)\big)_{i \in \{1,...,N\}}, \quad k \in \N, 
$$
the random genetic type configuration in $E^{N}$ of the active population in generation $k$ (obtained from the above mechanism), and denote by 
$$
\eta_k:=\big(\eta_k(j)\big)_{j \in \{1,...,M\}}, \quad k \in \N, 
$$
correspondingly the genetic type configuration of the dormant population in $E^{M}$.
We call the discrete-time Markov chain $(\xi_k, \eta_k)_{k \in \N_0}$ with values in $E^{N} \times E^{M}$ the \emph{type configuration process} of the \emph{Wright-Fisher model with geometric seed bank component}.
\end{defn}

We now specialise to the bi-allelic case $E=\{a,A\}$ and define the frequency processes of $a$ alleles in the active population and in the seed bank. 

\begin{defn}[Forward frequency process, biallelic case]
\label{def:forward}
With the above notation and condition, define the discrete-time Markov chain $(X^N_k, Y^M_k)_{k\in\N_0}$ on $I^N\times I^M,$ by
\begin{equation}
\label{eq:frequency_chains}
X_k^N:= \frac 1N \sum_{i=1}^N {\bf 1}_{\{\xi_k(i)=a\}} \quad \mbox{ and } \quad Y_k^M:= \frac 1M \sum_{j =1}^M {\bf 1}_{\{\eta_k(j)=a\}}, \quad k \in \N_0,
\end{equation}
Denote by $\P_{x,y}$ the initial distribution under which $(X^N, Y^M)$ starts in $(x,y), \P_{x,y}$-a.s., i.e.\ 
$$
\P_{x,y}(\,\cdot \,):= \P\big(\, \cdot \, \big| X^N_0 = x,\,Y^M_0=y\big) \quad \mbox{ for } \quad (x,y) \in I^N \times I^M
$$
(with analogous notation for the expectation $\E_{x,y}$). 
\end{defn}

Our next aim is to characterise the corresponding time-homogeneous transition probabilities. To this end, we introduce auxiliary random variables in a similar fashion as in \cite{BGCKW16}. For a fixed time $k\in \N$, let
\begin{itemize}
\item $T$ be the number of active individuals that are offspring of a dormant $a$-individual,
\item $U$ be the number of active individuals that are offspring of an active $a$-individual,
\item $V$ be the number of dormant individuals that are offspring of an active $a$-individual,
\item and $W$ be the number of dormant individuals that are offspring of a dormant $a$-individual.
\end{itemize}
With this notation, if $X^N_0=x, Y^M_0=y$ $\P$-almost surely, we have the representation
\be 
\label{eq:freq_representation}
X^N_1=\frac{1}{N}(T+U) \mbox{ and } Y^M_1=\frac{1}{M}(V+W).
\ee
According to our construction, these random variables are all independent. Of course, the distributions depend on the type of event, chosen by $R_1$, and on the choice of model A or B. In model A, the distributions are given in Table \ref{table}, where $\Hyp_{n,m,k}$ denotes the hypergeometric distribution with parameters $n,m,k\in\N,$ and $\Bin_{n,p}$ is the binomial distribution with parameters $n\in \N$ and $p\in[0,1]$. The transitions from $(X_k^N,Y_k^N)$ to $(X_{k+1}^N, Y_{k+1}^M)$ can be described analogously. In model B, conditional on the sequences $(Z_k)_{k\in\N}, (\bZ_k)_{k\in\N}$, the random variables can be constructed in a similar fashion.

\begin{table}
\begin{center}
\begin{tabular}{|c||l|l|l|l|}
\hline
$R_k$&$T$&$U$&$V$&W\\
\hline
\hline
$S$&$\sim \Hyp_{c,M,yM}$& $\sim \Bin_{N-c,x}$& $\sim \Bin_{\bc, x}$& $\equiv yM-Z$\\
\hline
$F$&$\sim \Hyp_{zN,M,yM}$&$\sim \Bin_{1-zN,x}$& $\equiv 0$&$\equiv yM$\\
\hline
$D$& $\equiv 0$& $\sim\Bin_{N,x}$& $\sim \Bin_{\bz M,x}$& $\sim\Hyp_{(1-\bz) M, M, yM}$\\
\hline
\end{tabular}
\caption{Distribution of auxiliary random variables under events $S, F, D$}
\label{table}
\end{center}
\end{table}

\subsection{Limiting generators of the frequency processes}

Here, we follow the usual scaling limit paradigm in population genetics, where it is assumed that the population size $N$ tends to $\infty$, and simultanoeously time is measured on a macrsoscopic scale increasing also with $N$. Since in our case we have populations (of size $N$ and $M$ each), we first assume that the active and the dormant population keep the same {\em relative size}, that is we set $M=M(N):=\lfloor N/K \rfloor$, for some suitable constant $K>0$, as $N \to \infty$. The following arguments follow the standard machinery for the convergence of Markov processes, as elaborated e.g.\ in \cite{EK86}, and thus we focus on the crucial steps and computations.

We begin with the scaling limit in the case of fixed simultaneous switching size, i.e.\ model A. We can define the discrete generator of the process $(X_{\lfloor D_{N,M}t\rfloor}, Y_{\lfloor D_{N,M}t\rfloor})_{t\geq 0}$ on time scale $D_{N,M}$ acting on suitable functions $f$ (e.g.\ $C^2([0,1] ^2)$ by
\[
A^{N,M}f(x,y)=D_{N,M}\E_{x,y}\big[f(X^N_1, Y^M_1)-f(x,y)\big], \quad (x,y) \in I^N \times I^M.
\]
With some experience, it is not hard to guess the shape of the limiting process. We know from \cite{BGCKW16} that the frequent small events $S'$ lead to the seed bank diffusion with migration rates $c, cK$, and it is easy to see that this is still the case for $S$, however this time with migration rates $c, \bc K$. The much rarer $F$-event leads to a jump of size $z(y-x)$ in the active population, and a $D$-event leads to a jump of size $\bz(x-y)$ in the dormant population.  

\medskip

To make this rigorous, we assume that for every $N\in\N$ the random variables $R_k=R^N_k, k\in\N$ which determine the jump types are iid and such that as $N\to\infty$
\be
\label{eq:R_split}
\P(R_k=F)=\frac{r_F}{N}(1+o(1)), \quad \P(R_k=D)=\frac{r_D}{N}(1+o(1))
\ee
for some $0\leq r_F,r_D<\infty$, and 
\be
\label{eq:R_split_2}
\P(R_k=S)=1-\P(R_k=F)-\P(R_k=D).
\ee
Observe that $I^N \times I^M \subset [0,1]^2$ and for each $(x,y) \in [0,1]^2$ denote by $\pi_N(x,y)$ the canonical projection 
on $I^N \times I^M$, mapping $(x,y)$ to the closest point in $I^N \times I^M$ with coordinates smaller than $x$ resp.\ $y$.

\begin{thm}[Limiting generator in model A]
\label{thm:conv_fixed}
Under our above assumptions, we obtain with the choice $D_{N,M}=N$ that
\[
\lim_{N\to\infty} \sup_{(x,y) \in [0,1]^2}\big|A^{N,M}f(\pi_N(x,y))-Af(x,y)\big| =0
\]
for all $f\in C^{2}([0,1]^2)$, where 
\begin{align*}Af(x,y)=&r_Ff(x+z(y-x),y)+r_Df(x,y+\bz(x-y))-(r_F+r_D)f(x,y)\\
&+c(y-x)\frac{\partial }{\partial x}f(x,y)+\bc K(x-y)\frac{\partial }{\partial y}f(x,y)+\frac{1}{2}x(1-x)\frac{\partial^2 }{\partial x^2}f(x,y).\end{align*}
\end{thm}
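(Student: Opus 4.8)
The plan is to follow the standard generator-convergence machinery of \cite{EK86}: decompose $A^{N,M}f$ according to the three possible events $R_1\in\{S,F,D\}$ and show that each contribution converges uniformly to the corresponding part of $Af$. Writing
\[
A^{N,M}f(x,y)=N\sum_{r\in\{S,F,D\}}\P(R_1=r)\,\E_{x,y}\big[f(X^N_1,Y^M_1)-f(x,y)\,\big|\,R_1=r\big],
\]
the scaling \eqref{eq:R_split} gives $N\P(R_1=F)\to r_F$ and $N\P(R_1=D)\to r_D$, while $N\P(R_1=S)=N(1+o(1))$ by \eqref{eq:R_split_2}. Each conditional expectation is computed from the distributions of $T,U,V,W$ in Table \ref{table} together with the representation \eqref{eq:freq_representation}, using that these variables are independent.

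For the $F$-term I would first note that, conditionally on $F$, the seed bank does not change ($V\equiv0$, $W\equiv yM$, so $Y^M_1=y$), while $X^N_1=\tfrac1N(T+U)$ has conditional mean $\E_{x,y}[X^N_1\mid F]=zy+(1-z)x=x+z(y-x)$ and, since $T$ and $U$ are independent with variances $O(N)$, conditional variance $O(1/N)$. Hence $X^N_1$ concentrates at $x+z(y-x)$, and a second-order Taylor expansion of $f$ (whose derivatives are bounded on the compact square) yields
\[
\E_{x,y}\big[f(X^N_1,Y^M_1)-f(x,y)\mid F\big]=f\big(x+z(y-x),y\big)-f(x,y)+O(1/N)
\]
uniformly in $(x,y)$. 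Multiplying by $N\P(R_1=F)\to r_F$ produces the first jump term. The $D$-term is symmetric: now $T\equiv0$ forces $X^N_1$ to concentrate at $x$, while $Y^M_1=\tfrac1M(V+W)$ has conditional mean $\bz x+(1-\bz)y=y+\bz(x-y)$ and vanishing conditional variance, giving $r_D\big[f(x,y+\bz(x-y))-f(x,y)\big]$.

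The $S$-term is where the diffusion appears and requires the most care. Conditionally on $S$ one computes from Table \ref{table} the conditional means $\E_{x,y}[X^N_1-x\mid S]=\tfrac cN(y-x)$ and $\E_{x,y}[Y^M_1-y\mid S]=\tfrac{\bc}{M}(x-y)$, the active second moment $\E_{x,y}[(X^N_1-x)^2\mid S]=\tfrac1N x(1-x)+O(1/N^2)$, and the seed-bank second moment $\E_{x,y}[(Y^M_1-y)^2\mid S]=O(1/N^2)$ (the $\Hyp$ and $\Bin$ variables entering $V,W$ contribute only $O(1)$ to the variance of $MY^M_1$, which is exactly why no $\partial_{yy}$ term survives in the limit). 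Inserting these into a second-order Taylor expansion of $f$, bounding the cross term by Cauchy--Schwarz as $O(N^{-3/2})$ and the remainder by the modulus of continuity of the second derivatives, and multiplying by $N\P(R_1=S)=N(1+o(1))$, the factor $N$ cancels the $1/N$ and $1/M$ scalings; using $M=\lfloor N/K\rfloor$ so that $N/M\to K$, this contribution converges to $c(y-x)\partial_xf+\bc K(x-y)\partial_yf+\tfrac12 x(1-x)\partial_{xx}f$ uniformly in $(x,y)$.

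Collecting the three limits reproduces $Af$. The remaining point is uniformity together with the projection: since all estimates above are expressed through $\sup$-norms of derivatives of $f$ and uniformly bounded conditional moments, the bounds hold uniformly over the grid $I^N\times I^M$, and since $Af$ is continuous on $[0,1]^2$ while $\pi_N(x,y)\to(x,y)$ uniformly, one also has $|Af(\pi_N(x,y))-Af(x,y)|\to0$ uniformly. The main obstacle I anticipate is precisely this bookkeeping of uniformity: one must verify that the $o(1)$ corrections in \eqref{eq:R_split}, the floor in $M=\lfloor N/K\rfloor$, and the Taylor remainders are all controlled uniformly in $(x,y)$, and in particular that the large prefactor $N\P(R_1=S)\sim N$ multiplying the small $S$-increment does not amplify any non-uniform error term.
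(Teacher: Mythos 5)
Your proof is correct and reaches the same limit, but by a genuinely different route in two respects. First, where the paper reduces to monomials $f(x,y)=x^ny^m$ (invoking density of polynomials in $C^2([0,1]^2)$) and then computes $\E_{x,y}[(T+U)^n]$ explicitly -- expanding around the mean and bounding all higher centered moments of the hypergeometric and binomial variables by induction to establish \eqref{eq:jump} -- you work directly with an arbitrary $f\in C^2$ via a second-order Taylor expansion plus the concentration $\Var(X_1^N\mid F)=O(1/N)$. This only needs first and second conditional moments rather than all of them, and it sidesteps the density step entirely; the price is the extra bookkeeping you flag at the end (splitting the Taylor remainder on the event that the increment exceeds $\delta$, controlled e.g.\ by a Chebyshev/third-moment bound, so that the factor $N$ in front of the $S$-term does not amplify it). Second, for the $S$-contribution the paper simply cites Proposition 2.4 of \cite{BGCKW16} and notes that replacing $S'$ by $S$ only changes $c$ to $\bc$ in the $\partial_y$-coefficient, whereas you sketch the moment computations (conditional drifts $\tfrac{c}{N}(y-x)$ and $\tfrac{\bc}{M}(x-y)$, active second moment $\tfrac{1}{N}x(1-x)+O(N^{-2})$, dormant second moment $O(N^{-2})$) from scratch; these are exactly the ingredients behind the cited result, so nothing is lost, and your version is self-contained. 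Both arguments handle the $F$- and $D$-terms by the same splitting \eqref{eq_split} with $N\P(R_1=F)\to r_F$, $N\P(R_1=D)\to r_D$, so the overall architecture coincides; the difference is purely in how the conditional expectations are estimated.
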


\begin{rem}
If mechanism $S'$ is assumed instead of $S$ in model A, the result holds with $\bc$ replaced by $c.$
\end{rem}

\begin{proof}

By standard arguments, it is sufficient to prove the stated convergence for polynomials $f(x,y)=x^ny^m, n,m\in\N_0,$ on $[0,1]^2$, since polynomials are dense in $C^{2}([0,1]^2)$. Using \eqref{eq:R_split} and \eqref{eq:R_split_2}, we can split according to the different values of $R_1$ to obtain (for $N$ large enough)
\begin{align}\label{eq_split}
\E_{x,y}\big[f(X^N_1,Y^M_1)-&f(x,y)\big]\\  
=&\E_{x,y}\big[f(X^N_1,Y^M_1)-f(x,y)\,|\,R_1=S\big]\P_{x,y}(R_1=S) \nonumber\\
&+\E_{x,y}\big[f(X^N_1,Y^M_1)-f(x,y)\,|\,R_1=F\big]\P_{x,y}(R_1=F) \nonumber \\&
+\E_{x,y}\big[f(X^N_1,Y^M_1)-f(x,y)\,|\,R_1=D\big]\P_{x,y}(R_1=D)\nonumber
\end{align}
for all $(x,y) \in  I^N \times I^M$. In \cite{BGCKW16}, Proposition 2.4, it was shown that
\begin{align*}
\lim_{N\to\infty} \sup_{(x,y)\in[0,1]^2}N &\E_{\pi_N(x,y)}\big[f(X^N_1,Y^M_1)-f(x,y)\,|\,R_1=S'\big] \\
=&c(y-x)\frac{\partial }{\partial x}f(x,y)+c K(x-y)\frac{\partial }{\partial y}f(x,y)+\frac{1}{2}x(1-x)\frac{\partial^2 }{\partial x^2}f(x,y),
\end{align*}
uniformly for all $(x,y) \in [0,1]^2$. The case for $R_1=S$ works similarly and leads to the same result, except that in the coefficient of  $\frac{\partial }{\partial y}f(x,y)$, the constant $c$ is replaced by $\bc$. We skip the somewhat tedious calculations and refer to the Appendix of \cite{BGCKW16} instead. Since $\P_{x,y}(R_1=S)$ converges for $N\to\infty$ to 1 uniformly in $x$ and $y,$ we obtain the desired convergence of the first summand in \eqref{eq_split}.

Consider now $R_1=F,$ the case $R_1=D$ works similarly. By construction, we have for $f(x,y)=x^ny^m$, using \eqref{eq:freq_representation} and Table \ref{table},
\begin{align*}
\E_{x,y}\big[f(X^N_1,Y^M_1)-&f(x,y)\,\big|\,R_1=F\big]\\
=&\E_{x,y}\big[(X^N_1)^n(Y_1^M)^m-x^ny^m\,\big|\,R_1=F\big]\\
=&\frac{1}{N^nM^m}\E_{x,y}\big[(T+U)^n(V+W)^m1_{\{R_1=F\}}\big]-x^ny^m\\
=&\frac{1}{N^n}y^m\E_{x,y}\big[(T+U)^n1_{\{R_1=F\}}\big]-x^ny^m.
\end{align*} 

We claim that for all $n\in\N$, on $\{R_1=F\}$
\be 
\label{eq:jump}
\frac{1}{N^n}\E_{x,y}\big[(T+U)^n\big]=(x+z(y-x))^n+C_N(x,y),
\ee
with $\sup_{x,y\in[0,1]}C_N(x,y)\leq N^{-1}.$ Then the result follows, since $N\P_{x,y}(R_1=F)\to r_F$ as $N\to\infty$ uniformly in $x$ and $y.$

To prove \eqref{eq:jump}, observe that
\[
\E_{x,y}\big[(T+U)^n\big]=\big(\E_{x,y}[T]+E_{x,y}[U]\big)^n+\E_{x,y}\big[(T-\E[T])^n\big]+\E_{x,y}\big[(U-\E_{x,y}[U])^n\big]+R_{n,x,y}(T,U),
\]
where $R_{n,x,y}(T,U)$ consists of mixed terms of the form 
$$
C\big(\E_{x,y}[T]+\E_{x,y}[U]\big)^k\E_{x,y}\big[(T-\E_{x,y}[T])^l\big]\E_{x,y}\big[(U-\E_{x,y}[U])^m\big],
$$ 
with $k,l,m\leq n-1, k+l+m=n$ and combinatorial prefactors $C$ depending only on $k,l$ and $m.$ Note that 
$$
\E_{x,y}[T]+\E_{x,y}[U]=(zy+(1-z)x)N=(x+z(y-x))N.
$$ 
We are thus done once we prove that the $n$th centered moments of $T$ and $U$ are of order at most $N^{n-1}, n\in\N,$ uniformly in $x$ and $y.$ For the first two centered moments of $T$ we have $\E_{x,y}[T-\E_{x,y}[T]]=0$ and 
$$
\E_{x,y}\big[|T-\E_{x,y}[T]|^2\big]=\mathbb{V}_{x,y}(T)=Nzy(1-y)\frac{M-zN}{M-1}\leq N.
$$ 
For $n\geq 3$ we have
\[
\E_{x,y}\big[|T-\E_{x,y}[T]|^n\big]=\E_{x,y}\big[|T-\E_{x,y}[T]|^{n-1}|T-\E_{x,y}[T]|\big]\leq (N+1)\E_{x,y}\big[|T-\E_{x,y}[T]|^{n-1}\big],
\]
since $T$ is hypergeometric with values in $\{0,...,M\},$ and thus trivially $|T-\E_{x,y}[T]|\leq M+1$. By induction, $\E_{x,y}[|T-\E_{x,y}[T]|^{n-1}]\leq (M+1)^{n-2}, n\geq 2.$ This implies that 
$$
\sup_{x,y\in[0,1]}\big|\E_{x,y}\big[(T-\E_{x,y}[T])^n\big]\big|\leq \sup_{x,y\in[0,1]} \E_{x,y}\big[|T-\E_{x,y}[T]|^n\big]\leq (KN)^{n-1}+O(N^{n-2}).
$$ 
for all $n\geq 1$ (recall $N=KM$). Similar considerations hold for $U,$ which is binomial.
\end{proof}


In model B we make the following assumption. Let $(r_N)_{N\in\N}$ and  $(\br_N)_{N\in\N}$ denote sequences of nonnegative numbers such that $(r_N/N)_{N\in\N}$ and $(\br_N/N)_{N\in\N}$ converge to 0 as $N\to\infty.$ Assume that there exist measures $\mu,\bmu$ on $[0,1]$ with 
\be
\label{eq:con_finite_meas}
\int_{[0,1]}z \mu(\dd z)<\infty\quad \mbox{ and } \quad \int_{[0,1]}\bz \bmu(\dd \bz)<\infty
\ee
such that weakly,
\be
\label{eq:cond_conv_meas}
\lim_{N\to\infty}r_N\mu_N=\mu 
\ee
and analogously for $\br_N, \bmu_N, \bmu.$ Observe that in particular $\mu,\bmu$ need not be finite measures. 

We now assume that each sequence $(R^N_k)_{k\in \N}, N\in\N$ is iid such that as $N\to\infty$
\be
\label{eq:rN_cond_1}
\P(R^N_k=F)=\frac{r_N}{N}(1+o(1)), \qquad \P(R^N_k=D)=\frac{\br_N}{N}(1+o(1)), 
\ee
and
\be
\label{eq:rN_cond_2}
\P(R^N_k=S)=1-\frac{r_N+\br_N}{N}(1+o(1)).
\ee

\begin{thm}[Limiting generator in model B]
\label{thm:conv_random}
Under our above assumptions, we obtain with the choice $D_{N,M}=N$ that
\[
\lim_{N\to\infty} \sup_{(x,y) \in [0,1]^2}\big|A^{N,M}f(\pi_N(x,y))-Af(x,y)\big| =0
\]
for all $f\in C^{2}([0,1]^2)$, where  
\begin{align*}Af(x,y)=&\int_{[0,1]}(f(x+z(y-x),y)-f(x,y))\mu(\dd z)\\
&+\int_{[0,1]}(f(x,y+\bz(x-y)-f(x,y))\bmu(\dd \bz)\\
&+c(y-x)\frac{\partial }{\partial x}f(x,y)+\bc K(x-y)\frac{\partial }{\partial y}f(x,y)+\frac{1}{2}x(1-x)\frac{\partial^2 }{\partial x^2}f(x,y).\end{align*}
\end{thm}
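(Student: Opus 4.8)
The plan is to mirror the proof of Theorem~\ref{thm:conv_fixed}, isolating the two genuinely new features of model B: the averaging over the random jump sizes and the passage to a limiting jump measure that need not be finite. As in Theorem~\ref{thm:conv_fixed}, by density of polynomials in $C^{2}([0,1]^2)$ it suffices to treat $f(x,y)=x^ny^m$ (the uniform-in-$N$ $C^2$-bound on $A^{N,M}$ needed for this reduction follows from $|f(x+z(y-x),y)-f(x,y)|\le z\,\|\partial_x f\|_\infty$ together with the finiteness $\int_{[0,1]}z\,\mu(\dd z)<\infty$). We split $\E_{x,y}[f(X_1^N,Y_1^M)-f(x,y)]$ according to $R_1\in\{S,F,D\}$ exactly as in \eqref{eq_split}. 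The event $S$ is unchanged from model A, and since $\P_{x,y}(R_1=S)\ra1$ uniformly (using $r_N/N\ra0$ and $\br_N/N\ra0$), the $S$-summand reproduces verbatim the drift--diffusion part $c(y-x)\partial_x f+\bc K(x-y)\partial_y f+\tfrac12 x(1-x)\partial_x^2 f$.

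For the $F$-summand I would condition additionally on $Z_1=z$. On $\{R_1=F,\,Z_1=z\}$ the dormant frequency is unchanged while $X_1^N=(T+U)/N$ with $T\sim\Hyp_{zN,M,yM}$ and $U\sim\Bin_{(1-z)N,x}$, which is precisely the model-A situation for the constant $z$. Hence \eqref{eq:jump} applies and yields
\[
\E_{x,y}\big[f(X_1^N,Y_1^M)-f(x,y)\,\big|\,R_1=F,\,Z_1=z\big]=\big(f(x+z(y-x),y)-f(x,y)\big)+y^mC_N(x,y,z),
\]
where the crucial point is that $\sup_{x,y,z}|C_N(x,y,z)|\le C/N$ holds \emph{uniformly also in} $z\in[0,1]$, since $z$ enters the centred-moment bounds of model A only through factors bounded by $1$. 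Averaging over $\mu_N$ and multiplying by $N\P_{x,y}(R_1=F)=r_N(1+o(1))$, the $F$-summand equals
\[
r_N\!\int_{[0,1]}\!\big(f(x+z(y-x),y)-f(x,y)\big)\mu_N(\dd z)+r_N(1+o(1))\,y^m\!\int_{[0,1]}\!C_N(x,y,z)\,\mu_N(\dd z).
\]
The second (error) term is harmless: as $\mu_N$ is a probability measure and $|C_N|\le C/N$, it is at most $Cr_N/N\ra0$, uniformly in $(x,y)$ -- this is exactly where $r_N/N\ra0$ is used.

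The heart of the matter is the convergence of the main term, uniformly in $(x,y)$, to $\int_{[0,1]}g_{x,y}\,\dd\mu$ with $g_{x,y}(z):=f(x+z(y-x),y)-f(x,y)$. Since $\mu$ may be infinite, \eqref{eq:cond_conv_meas} cannot be applied directly to the bounded continuous test function $g_{x,y}$. Instead I would use that $g_{x,y}$ vanishes linearly at $0$: writing $g_{x,y}(z)=z\,h_{x,y}(z)$ one checks that $h_{x,y}(z)=y^m(y-x)\sum_{k=0}^{n-1}(x+z(y-x))^kx^{n-1-k}$ is \emph{jointly polynomial} in $(x,y,z)$, hence continuous and bounded on $[0,1]^3$. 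Then
\[
r_N\int_{[0,1]}g_{x,y}\,\dd\mu_N=\int_{[0,1]}h_{x,y}(z)\,z\,(r_N\mu_N)(\dd z),
\]
and because $z\,\mu(\dd z)$ is \emph{finite} by \eqref{eq:con_finite_meas}, the natural reading of \eqref{eq:cond_conv_meas} -- and the one making it meaningful for infinite $\mu$ -- is that the finite measures $z\,(r_N\mu_N)(\dd z)$ converge weakly to $z\,\mu(\dd z)$; testing against $h_{x,y}$ gives pointwise convergence. Uniformity over $(x,y)$ then follows since $\{h_{x,y}\}_{(x,y)\in[0,1]^2}$ is a bounded subset of a fixed finite-dimensional space of polynomials in $z$, hence relatively compact in $C([0,1])$, while the total masses $r_N\int z\,\mu_N\ra\int z\,\mu$ stay bounded; a standard $\eps$-net argument upgrades pointwise to uniform convergence. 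The $D$-summand is handled identically with $(x,y)$, $Z_1$, $\mu_N$, $r_N$ replaced by $(y,x)$, $\bZ_1$, $\bmu_N$, $\br_N$, producing the second integral in $Af$; one extra remark is that the full Wright--Fisher resampling of the active population under $D$ only adds an $O(1/N)$ correction, absorbed into the error term since $\br_N/N\ra0$. Summing the three contributions yields $Af$.

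I expect the genuine obstacle to be the passage to the limit when $\mu$ is infinite: everything rests on matching the linear vanishing $g_{x,y}(z)=O(z)$ against the integrability $\int z\,\mu(\dd z)<\infty$, which is exactly why \eqref{eq:con_finite_meas} is assumed and the one place where model B departs essentially from model A rather than merely adding an integral. The control of the error terms and the uniformity in $(x,y)$ are comparatively routine, resting respectively on $r_N/N,\br_N/N\ra0$ and on the relative compactness of the polynomial test families.
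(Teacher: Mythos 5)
Your proposal is correct and follows the same route as the paper's own proof: split on $R_1$, reuse the model-A jump computation (now uniformly in the jump size $z$), control the error via $r_N/N\to 0$, and pass to the limit in the integrals against $r_N\mu_N$ resp.\ $\br_N\bmu_N$. You in fact supply the one step the paper leaves implicit: since $\mu$ may be infinite, your factorisation $f(x+z(y-x),y)-f(x,y)=z\,h_{x,y}(z)$ with $h_{x,y}$ polynomial, together with reading \eqref{eq:cond_conv_meas} as weak convergence of the finite measures $z\,r_N\mu_N(\dd z)$ to $\Lambda(\dd z)=z\,\mu(\dd z)$ (the interpretation Remark \ref{rem:Lambdas} intends) and the compactness argument for uniformity in $(x,y)$, is exactly what justifies the paper's one-line appeal to ``the weak convergence of measures'', of which the paper itself only verifies the finiteness of the limiting integrals.
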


\begin{rem}
Note that in particular the functions $f(x,y)=x^ny^m, n,m\in\N_0$ are in the domain of $A$. If $n,m\geq 1,$ this follows from \eqref{eq:con_finite_meas}, if $n=0$ we have $f(x+z(y-x),y)-f(x,y)=0,$ and analogously for $m=0.$ 
\end{rem}

\begin{proof}
The proof follows from Theorem \ref{thm:conv_fixed} if we additionally show that uniformly in $(x,y)\in[0,1]^2$
\[\lim_{N\to\infty}\int_{[0,1]} [f(x+z(y-x),y)-f(x,y)]\frac{r_N}{N}\mu_N(\dd z)=\int_{[0,1]} [f(x+z(y-x),y)-f(x,y)]\mu(\dd z)\]
and 
\[\lim_{N\to\infty}\int_{[0,1]} [f(x,y+\bz(x-y))-f(x,y)]\frac{\br_N}{N}d\bmu_N(\dd \bz)=\int_{[0,1]} [f(x+ \bz(y-x),y)-f(x,y)] \bmu(\dd \bz)\]
hold and are finite, which by construction is the case if and only if the integrals on the rhs are finite for every $x,y\in[0,1],$ due to the weak convergence of measures.
By density of the monomials it is sufficient to check this for functions of the form $f(x,y)=x^ny^m,$ and because we are working on $[0,1],$ by monotonicity, it is sufficient to look at $f(x,y)=x$ and $f(x,y)=y$ (all other mixed monomials are bounded by these two). But we have
\[\sup_{(x,y)\in[0,1]^2}\int_{[0,1]} |x+z(y-x)x|\mu(\dd z)=\sup_{(x,y)\in[0,1]^2}\int_{[0,1]} z|y-x|\mu(\dd z)<\infty\]
according to the assumption $\int_{[0,1]}z \mu(\dd z)<\infty$ and $|x-y|\leq 1,$ and likewise for the other cases. This completes the proof.
\end{proof}

\begin{rem}\label{rem:Lambdas}
The condition $\int_{[0,1]}z \mu(\dd z)<\infty$ implies that $\Lambda(A):=\int_Az \mu(\dd z)$ for Borel sets $A\subseteq [0,1]$ defines a finite measure $\Lambda$ on $[0,1]$ which satisfies $\Lambda(\{0\})=0.$ On the other hand, if $\Lambda$ is a finite measure on $[0,1]$ with $\Lambda(\{0\})=0,$ then $\mu(\dd z):=z^{-1}\Lambda(\dd z)$ defines a $\sigma-$finite measure on $(0,1].$ We may extend it to $[0,1]$ by setting $\mu(\{0\})=0,$ which is no restriction, since choosing $z=0$ in the large migration mechanism has no effect. We will thus often 
work with $\Lambda$ instead of $\mu,$ and similarly with $\bLambda$ instead of $\bmu.$ The condition on $\mu$ resp. on $\Lambda$ is also necessary to define a dual process with finite rates, see Definition \ref{Block} later on. We further elaborate on this point in remark \ref{rem:singularity}.

Given a finite non-zero measure $\Lambda$ on $[0,1]$ with $\Lambda(\{0\})=0$, it is always possible to construct a sequence of probability measures $\mu_N$ on $I^N$ and a sequence $(r_N)_{N\in\N}$ such that $r_N/N\to 0$ as $N\to\infty$ and $r_N\mu_N(\dd z)\to z^{-1}\Lambda(\dd z)$ weakly. 

\end{rem}

\subsection{The seed bank diffusion with jumps and its dual process}
\label{ssn:dual}

In the previous section, we showed that the generators $A^{N,M}$ of the rescaled frequency processes $(X_{\lfloor D_{N,M}t\rfloor}, Y_{\lfloor D_{N,M}t\rfloor})_{t\geq 0}$ on time scale $D_{N,M}$ in model A resp.\ model B converge to a non-trivial Markov generator. We have not yet given an explicit jump-diffusion representation for the corresponding limiting processes $(X_t, Y_t)_{t\geq 0}$, which we now provide. We will also use $(X_t, Y_t)_{t\geq 0}$ to state the moment duality of our system below.

\begin{defn}[Seed bank diffusion with fixed-size jumps]
For $z, \bz \in (0,1)$ we call the unique strong solution $(X(t), Y(t))_{t \geq 0}$, starting in $(x, y)\in [0,1]^2$, of the initial value problem 
\begin{align}
\label{eq:system_fixed}
\text{d} X(t) & = c(Y(t) -X(t))\text{d}t + \sqrt{X(t)(1-X(t))}\text{d}B(t)\, \\[.1cm]
	          & \qquad \qquad\qquad\qquad \quad \, + r_F \int \limits_{(0,t]} \big(X(r-)+z(Y(r-)-X(r-)\big) \, N_F\big(\text{d}r\big),\notag \\[.1cm]
\text{d} Y(t) & = \bc K(X(t) -Y(t))\text{d}t + r_D \int \limits_{(0,t]} \big(Y(t-)+\bz (X(t-)-Y(t-)\big) \, 
	   N_D\big(dr\big), \notag 	   
\end{align}
with $(X(0), Y(0)) =(x,y) \in [0,1]^2$, where $(B(t))_{t\geq 0}$ is a standard Brownian motion and $N_F$ and $N_D$ are independent standard Poisson processes driving the simultaneous switching events, {\em seed bank diffusion with fixed-size jumps} $(z, \bz)$.
\end{defn}

A similar representation can be provided for model B. 

\begin{defn}[Seed bank diffusion with variable-size jumps]
For $\mu, \bmu$ as in the previous section, we call the unique strong solution $(X(t), Y(t))_{t \geq 0}$, starting in $(x, y)\in [0,1]^2$, of the initial value problem 
\begin{align}
\label{eq:system_variable}
\text{d} X(t) & = c(Y(t) -X(t))\text{d}t + \sqrt{X(t)(1-X(t))}\text{d}B(t)\,  \\[.1cm]
	          & \qquad \qquad\qquad\qquad \quad \, + \int \limits_{(0,t] \times [0,1]} \big(X(r-)+z(Y(r-)-X(r-)\big) \, {N^{\mu}_F}\big(\text{d}r,\text{d}z \big),\notag \\[.1cm]
\text{d} Y(t) & = \bc K(X(t) -Y(t))\text{d}t + \int \limits_{(0,t]\times [0,1]} \big(Y(r-)+\bz (X(r-)-Y(r-)\big) \, 
	   {N^{\bar \mu}_D}\big(\text{d}r,\text{d}\bz\big), \notag 
\end{align}
with $(X(0), Y(0)) =(x,y) \in [0,1]^2$, where $(B(t))_{t\geq 0}$ is a standard Brownian motion and $({N^{\mu}_F}(t)){t\geq 0}$ and $({N^{\bar \mu}_D}(t))_{t\geq 0}$ are independent standard Poisson point processes on $(0, \infty) \times [0,1]$ with intensity measure $\lambda(\dd t) \otimes \mu(\dd z)$ resp.\ $\lambda(\dd t) \otimes \bar \mu(\dd \bz)$ driving the simultaneous switching events of random size, {\em seed bank diffusion with variable-size jumps} with jump laws $(\mu, \bmu)$. Here, $\lambda$ denotes the Lebesgue measure on $\R.$
\end{defn}

Note that the above initial value problems are two-dimensional jump-diffusions with non-Lipschitz coefficients. Fortunately, existence and uniqueness results for such systems have recently drawn considerable interest, and we may refer e.g.\ \cite{K07, K14} or the perhaps more readily accessible \cite{BLP15} for an existence and strong uniqueness result.

With the limit thus being well-defined, under the condition that $(X^N_0, Y^M_0)_{t\geq 0}$ converge weakly to $(x,y) \in [0,1]^2,$ Theorem \ref{thm:conv_fixed} resp\  \ref{thm:conv_random} imply the weak convergence
$$
\big(X^N_{\lfloor D_{N,M}t\rfloor}, Y^M_{\lfloor D_{N,M}t\rfloor}\big)_{t\geq 0} \Rightarrow \big(X^N_t, Y^M_t\big)_{t\geq 0}
$$
on the Skorohod space of c\`adl\`ag paths, where $(X^N_t, Y^M_t)_{t\geq 0}$ is the unique strong (and strong Markov) solution to the initial value problems \eqref{eq:system_fixed} resp.\ \eqref{eq:system_variable} (see e.g.\ Theorem 19.28 of \cite{K02} or Corollary 4.8.9 of \cite{EK86}).

Before we state our envisaged moment duality, we define a suitable dual process. As usual, it turns out to be the block-counting process of the coalescent process describing the genealogy, to be defined formally in Section \ref{sn:seedbank_coalescent}.

\begin{defn}
\label{Block}
With the notation of B, we define the \emph{block-counting process of the seed bank coalescent with large migration events} $(N_t,M_t)_{t \ge 0}$ to be the continuous time Markov chain taking values in $\N_0\times\N_0$ with 
transitions
\be
\label{eq:dual_rates}
(n,m)\mapsto \begin{cases}(n-1,m+1)  &\text{ at rate }  \left(c+\int_0^1z(1-z)^{n-1}\mu(\dd z)\right)n, n\geq 1 \\
(n-k,m+k)  &\text{ at rate }  \binom{n}{k}\int_0^1 z^k(1-z)^{n-k}\mu(\dd z), 2\leq k\leq n, \\
(n+1,m-1) & \text{ at rate }  \left (\bc K +\int_0^1z(1-z)^{m-1}\bmu(\dd z)\right)m,m\geq 1\\
(n+l,m-l)  &\text{ at rate }  \binom{m}{l}\int_0^1z^l(1-z)^{m-k}\bmu(\dd z), 2\leq l\leq m, \\
(n-1,m)  & \text{ at rate } \binom{n}{2}, n\geq 2.
\end{cases}
\ee
For model A we consider the special case $\mu=\delta_z$ and $\bmu=\delta_\bz$ for some $z,\bz\in[0,1].$
\end{defn}
Denote by $\P^{n,m}$ the distribution for which $(N_0,M_0)=(n,m)$ holds $\P^{n,m}$-a.s., and denote the corresponding expected value by $\E^{n,m}$. It is easy to see that, {eventually}, 
$N_t +M_t=1$ (as $t \to \infty$), $\P^{n,m}$-a.s. for all $n,m \in \N_0$. We now show that $(N_t,M_t)_{t \ge 0}$ is the \emph{moment dual} of $(X_t,Y_t)_{t \ge 0}$.

\begin{thm}
\label{thm:dual}
For every $(x,y)\in [0,1]^2 $, every $n,m\in \N_0$ and every $t\geq 0$
\begin{equation}
\mathbb{E}_{x,y}\big[X_t^n Y_t^m\big]=\mathbb{E}^{n,m}\big[x^{N_t} y^{M_t}\big].
\end{equation}
\end{thm}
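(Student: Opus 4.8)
The plan is to establish the duality through the infinitesimal (generator) criterion for moment duality, using the duality function $H\big((x,y),(n,m)\big) = x^n y^m$, which is bounded by $1$ on $[0,1]^2 \times \N_0^2$. Writing $G$ for the generator $A$ of the jump-diffusion from Theorem \ref{thm:conv_random} acting on the variable $(x,y)$, and $\calL$ for the generator of the block-counting process $(N_t,M_t)$ associated with the rates \eqref{eq:dual_rates} acting on $(n,m)$, I would first verify the pointwise identity
\[
G\,H\big(\cdot,(n,m)\big)(x,y) \;=\; \calL\,H\big((x,y),\cdot\big)(n,m)
\]
for all $(x,y)\in[0,1]^2$ and $n,m\in\N_0$, and then invoke a standard duality theorem (e.g.\ Ethier--Kurtz \cite{EK86}, Theorem 4.4.11) to upgrade this infinitesimal relation to the asserted identity of expectations.

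For the generator identity I would match terms block by block. The diffusion part $\tfrac12 x(1-x)\partial_x^2 H = \binom{n}{2}(x^{n-1}-x^n)y^m$ reproduces exactly the Kingman rate $\binom{n}{2}$ for the transition $(n,m)\mapsto(n-1,m)$. The drift terms $c(y-x)\partial_x H = cn(x^{n-1}y^{m+1}-x^n y^m)$ and $\bc K(x-y)\partial_y H = \bc K m (x^{n+1}y^{m-1}-x^n y^m)$ reproduce the spontaneous (single-line) activation and deactivation contributions in the first and third lines of \eqref{eq:dual_rates}. The crux is the jump part: writing $x+z(y-x) = (1-z)x+zy$, expanding binomially, and using the identity $1-(1-z)^n = \sum_{k=1}^n\binom{n}{k} z^k(1-z)^{n-k}$ gives
\[
\int_{[0,1]}\big((x+z(y-x))^n - x^n\big)y^m\,\mu(\dd z)
= \sum_{k=1}^{n}\binom{n}{k}\int_{[0,1]} z^k(1-z)^{n-k}\mu(\dd z)\,\big(x^{n-k}y^{m+k}-x^n y^m\big),
\]
whose $k=1$ term supplies the $\mu$-part of the single-switch rate and whose $k\ge 2$ terms supply precisely the simultaneous $k$-fold activation rates $\binom{n}{k}\int z^k(1-z)^{n-k}\mu(\dd z)$. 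The symmetric computation with $y+\bz(x-y)=(1-\bz)y+\bz x$ and $\bmu$ yields the deactivation rates. Crucially, the potentially divergent $k=0$ contribution $\int(1-z)^n\mu(\dd z)\,x^n y^m$ is cancelled by the subtracted $H(x,y)$, and every surviving integral is finite because $z^k\le z$ for $k\ge1$, so that $\int z^k(1-z)^{n-k}\mu(\dd z)\le \int z\,\mu(\dd z)<\infty$ by \eqref{eq:con_finite_meas}.

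To conclude, I would check the hypotheses of the abstract duality theorem. The monomials $x^n y^m$ lie in $C^2([0,1]^2)$ and, by the remark following Theorem \ref{thm:conv_random}, in the domain of $A$. The finiteness of all transition rates just established, together with the fact that $N_t+M_t$ is non-increasing (migration preserves $N_t+M_t$ while Kingman coalescence decreases it), confines $(N_t,M_t)$ to the finite set $\{(n',m'):n'+m'\le n+m\}$, so the dual is a non-explosive, finite-rate continuous-time Markov chain; boundedness of $H$ then justifies the interchange of generator and expectation. The main obstacle I anticipate is not the algebra but the passage from the infinitesimal identity to the integrated duality: because the forward generator $A$ has a non-Lipschitz (square-root) diffusion coefficient, one must rely on the existence and strong uniqueness results cited after \eqref{eq:system_variable} to ensure that $(X_t,Y_t)$ is a well-defined solution for which $H(\cdot,(n,m))$ sits in a workable domain, and then verify the integrability condition $\E_{x,y}\int_0^t |G H((X_s,Y_s),(n,m))|\,\dd s<\infty$ (uniformly, using $|H|\le1$ and the finite jump rates). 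Once these regularity points are in place, the duality follows from the standard argument of differentiating $s\mapsto \E_{x,y}\E^{n,m}\big[H((X_s,Y_s),(N_{t-s},M_{t-s}))\big]$ in $s$ and using the generator identity to show it is constant.
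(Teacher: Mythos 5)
Your proposal is correct and follows essentially the same route as the paper: apply the generator $A$ to $x^ny^m$, expand $((1-z)x+zy)^n$ binomially, observe that the $k=0$ (and $l=0$) terms cancel against $-x^ny^m$, identify the result with the generator of the block-counting process, and conclude by a standard generator-duality argument (the paper cites \cite{JK14}, Proposition 1.2, where you cite Ethier--Kurtz). Your additional remarks on finiteness of the rates, non-explosiveness of the dual, and the integrability needed to integrate the infinitesimal identity are exactly the points the paper subsumes under ``standard arguments.''
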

\begin{proof}
Let $f(x,y;n,m):= x^ny^m $. Applying for fixed $n,m\in\N_0$ the generator $A$ of $(X_t,Y_t)_{t \geq 0}$ to $f$ acting as a function of $x$ and $y$
 gives
\begin{align*}
Af(x,y)
=&\int_{[0,1]}[((1-z)x+zy)^ny^m-x^ny^m]\mu(\dd z)\\
&+\int_{[0,1]}[((x^n((1-z)y+z x)^m-x^ny^m]\bmu(\dd z)+c(y-x)nx^{n-1}y^m\\
&+\frac{1}{2}x(1-x)n(n-1)x^{n-2}y^m+\bc K(x-y)x^{n}my^{m-1}\\
=&\sum_{k=2}^n\binom{n}{k}\int_{[0,1]}z^k(1-z)^{n-k}\mu(\dd z)(x^{n-k}y^{m+k}-x^ny^m)\\
&+\sum_{l=2}^m\binom{m}{l}\int_{[0,1]}z^l(1-z)^{n-l}\bmu(\dd z)(x^{m+l}y^{n-l}-x^ny^m)\\
&+n\big[c+\int_{[0,1]}z(1-z)^{n-1}\mu(\dd z)\big](x^{n-1}y^{m+1}-x^ny^m)\\
&+m\big[\bc K+\int_{[0,1]}z(1-z)^{m-1}\bmu(\dd z)\big](x^{n+1}y^{m-1}-x^ny^m)\\
&+\binom{n}{2}(x^{n-1}y^{m}-x^ny^m),
\end{align*}
where we have used the binomial theorem and the observation that the summands for $k=0$ and $l=0$ disappear.
Note that the \emph{rhs} is the precisely the generator of $(N_t,M_t)_{t \ge 0}$ applied to $f$ acting as a function of $n$ and $m,$ for fixed $x,y\in[0,1].$ Hence the duality follows from standard arguments, see e.g.\ \cite{JK14}, Proposition 1.2.
\end{proof}

\begin{rem}\label{rem:indblocks}[Alternative description of the block-counting process]
From the usual perspective of coalescents, we can describe the dynamics of the dual block--counting process \ref{Block} in the following intuitive way: Every block, independently of the others, migrates at rate $c$ from active to dormant and at rate $\bc K$ from dormant to active. Every given pair of active blocks coalesces at rate 1. Moreover, at fixed, constant rate 1 a large migration event from active to dormant happens, where every active block participates with probability $z$ (chosen according to $\mu$) independently of the others. Likewise, at constant rate 1, a large migration event from dormant to active happens, where every dormant block participates at rate $\bz$ independently of the others. Note that these large migration events may result in a migration of 0 blocks (with probability $(1-z)^n$), or in a migration of 1 block (with probability $z(1-z)^{n-1}$). This description makes it clear that different blocks move independently of the others, an observation which will be useful later, when we construct couplings of the block counting process with other processes.
\end{rem}

\subsection{Long-term behaviour and fixation probabilities}
\label{ssn:fixation}

The fixation probabilities can be calculated as for the usual seed bank coalescent. For simplicity we formulate the results only for model A, they are easily generalised to model B by integrating out the $z$ and $\bz$ according to the respective measures.

Obviously, $(0,0)$ and $(1,1)$ are absorbing states for the system \eqref{eq:system_variable}. They are also the only absorbing states, since absence of drift requires $x=y,$ and for the fluctuations to disappear, it is necessary to have $x\in\{0,1\}.$

\begin{prop}
\label{prop:moment_limit}
In model A, all mixed moments of $(X_t,Y_t)_{t \ge 0}$ solving \eqref{eq:system_variable} converge to the \emph{same} finite limit depending on $x,y, c,\bc, K, z,\bz$. More precisely, for each fixed $n,m\in\N$, we have
\begin{equation}
\label{eq:moment_value}
\lim_{t \to \infty} \mathbb{E}_{x,y}[X_t^{n}Y_t^{m}] = \frac{(\bc K+\bz)x+(c+z)y}{c+\bc K+z+\bz}.
\end{equation}
\end{prop}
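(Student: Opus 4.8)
The plan is to exploit the moment duality from Theorem \ref{thm:dual}, which converts the problem about the diffusion $(X_t,Y_t)$ into a problem about the long-term behaviour of the block-counting process $(N_t,M_t)$. Indeed, the duality gives
\[
\E_{x,y}[X_t^n Y_t^m] = \E^{n,m}[x^{N_t}y^{M_t}],
\]
so computing the limit on the left reduces to understanding the limiting distribution of $(N_t,M_t)$ as $t\to\infty$, started from $(n,m)$. Since (as noted just before the theorem) the block-counting process eventually satisfies $N_t+M_t=1$, the process is absorbed in the two states $(1,0)$ and $(0,1)$, corresponding to a single surviving lineage that is either active or dormant. On these absorbing states the dual observable $x^{N_t}y^{M_t}$ equals $x$ and $y$ respectively. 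Hence
\[
\lim_{t\to\infty}\E_{x,y}[X_t^n Y_t^m]
= x\,\P^{n,m}(\text{absorb in }(1,0)) + y\,\P^{n,m}(\text{absorb in }(0,1)),
\]
and this limit is manifestly independent of $n,m$ once we show the two absorption probabilities do not depend on the starting block numbers.

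The key step is therefore to compute the absorption probabilities of the dual. First I would argue absorption occurs almost surely: the Kingman coalescence at rate $\binom{n}{2}$ together with the individual migration rates drives $N_t+M_t$ down, and one checks the large-migration events never increase the total block count, so $(N_t,M_t)$ reaches $\{(1,0),(0,1)\}$ in finite time a.s. Once there is a single lineage, only the individual switching at rates $c$ (active$\to$dormant) and $\bc K$ (dormant$\to$active) remains active, augmented by the single-block participation in large events; from Remark \ref{rem:indblocks} a lone active block becomes dormant at total rate $c+z$ in model A (the large-$F$ event moves it with probability $z$ at rate $1$), and a lone dormant block becomes active at total rate $\bc K+\bz$. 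The final surviving lineage thus performs a two-state continuous-time Markov chain on $\{\text{active},\text{dormant}\}$ with these rates, whose stationary distribution assigns mass proportional to $(\bc K+\bz)$ to ``active'' and $(c+z)$ to ``dormant''. This gives
\[
\P^{n,m}(\text{absorb in }(1,0)) = \frac{\bc K+\bz}{c+\bc K+z+\bz},
\qquad
\P^{n,m}(\text{absorb in }(0,1)) = \frac{c+z}{c+\bc K+z+\bz},
\]
and substituting yields exactly \eqref{eq:moment_value}.

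The main obstacle I anticipate is justifying that the state of the \emph{last} remaining lineage is distributed according to the stationary law of the two-state chain independently of the history and of $(n,m)$, rather than depending on whether the lineage was active or dormant at the moment of final coalescence. The clean way around this is to observe that the two-state switching chain of a single surviving lineage is itself recurrent and ergodic, so after the (a.s.\ finite) last coalescence time the lineage continues to switch forever, and its time-$t$ distribution converges to the stationary law regardless of its state at the coalescence time; since we take $t\to\infty$, only this stationary law survives. This ergodicity argument is what makes the limit independent of $n,m$ and of all intermediate dynamics, and it is the step I would write out most carefully.
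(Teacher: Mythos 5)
Your proposal is correct and follows essentially the same route as the paper: apply the moment duality, note that the dual block-counting process a.s.\ reaches a single line in finite time, and then identify the limit with the stationary distribution of the surviving line's two-state switching chain with rates $c+z$ (active to dormant) and $\bc K+\bz$ (dormant to active). Your closing remark correctly identifies the one subtlety --- that $(1,0)$ and $(0,1)$ are not individually absorbing, so one must invoke ergodicity of the single-line chain rather than a genuine absorption probability --- which is exactly how the paper resolves it.
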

\begin{proof}
Let $(N_t,M_t)_{t \ge 0}$ be as in Definition \ref{Block}, started in $(n, m)\in \N_0\times\N_0\setminus{(0,0)}$. 
Let $T$ be the first time at which there is only one particle left in the system $(N_t, M_t)_{t \ge 0}$, that is,
$$
T := \inf\big\{t >0: N_t+M_t=1\big\}.
$$
Note that for any finite initial configuration $(n,m)$, the stopping time $T$ has finite expectation. Now, by Theorem \ref{thm:dual},
\begin{align*}
\lim_{t\rightarrow \infty}\E_{x,y}\big[X_t^{n}Y_t^{m}\big]	& = \lim_{t\rightarrow \infty} \E^{n,m}\left[x^{N_t}y^{M_t}\right] \\
			& = \lim_{t\rightarrow \infty} \Big(x \P^{n,m}\big(N_t=1\big)+ y 				\P^{n,m}\big(M_t=1\big)\Big) \\
					& = \frac{x(\bc K+\bz)}{c+\bc K+z+\bz} + \frac{y(c+z)}{c+\alpha K+z+\bz},
\end{align*}
where the last equality holds by convergence to the invariant distribution of a single particle, jumping between the two states `active' and `dormant' at rate $c+z$ resp.\ $\bc K+\bz$, which is given by $(\frac{\bc K+\bz}{c+\bc K+z+\bz)},\frac{c+z}{c+\bc K+z+\bz)}),$ and independent of the choice of $n,m$.
\end{proof}

\begin{corollary}[Fixation in law]\label{cor:fix_law}
In model A, given $c, K$, $(X_t, Y_t)_{t\geq 0}$ converges in distribution as $t\to\infty$ to a two-dimensional random variable $(X_\infty, Y_\infty),$ whose distribution is given by
\begin{equation}
\label{eq:momentconvergence}
 \mathcal{L}_{(x,y)}\big( X_\infty, Y_\infty \big) =  \frac{(\bc K+\bz)x+(c+z)y}{c+\bc K+z+\bz} \delta_{(1,1)} + 
\big(1-\frac{(\bc K+\bz)x+(c+z)y}{c+\bc K+z+\bz}\big) \delta_{(0,0)}.
\end{equation}
\end{corollary}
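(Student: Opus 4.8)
The plan is to deduce convergence in distribution from the convergence of all mixed moments established in Proposition \ref{prop:moment_limit}, exploiting the compactness of the state space $[0,1]^2$, where moment convergence upgrades to weak convergence.

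First I would record that the moment limit in fact holds for every pair $(n,m) \in \N_0 \times \N_0 \setminus \{(0,0)\}$, not merely for $n,m \geq 1$. Indeed, the dual representation $\E_{x,y}[X_t^n Y_t^m] = \E^{n,m}[x^{N_t} y^{M_t}]$ of Theorem \ref{thm:dual}, together with the fact that $N_t + M_t = 1$ eventually, shows by exactly the argument in the proof of Proposition \ref{prop:moment_limit} that
\[
\lim_{t\to\infty} \E_{x,y}\big[X_t^n Y_t^m\big] = p := \frac{(\bc K + \bz)x + (c+z)y}{c + \bc K + z + \bz}
\]
for all $(n,m)\neq (0,0)$; the single-particle stationary argument that identifies the absorption probabilities $\tfrac{\bc K+\bz}{c+\bc K+z+\bz}$ and $\tfrac{c+z}{c+\bc K+z+\bz}$ does not require $n,m\geq 1$, only $(n,m)\neq(0,0)$. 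For $(n,m)=(0,0)$ we trivially have $\E_{x,y}[X_t^0 Y_t^0]=1$.

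Next I would compute the moments of the candidate limit law $\nu := p\, \delta_{(1,1)} + (1-p)\, \delta_{(0,0)}$ and verify that they agree with these limits. Since at least one of $n,m$ is strictly positive whenever $(n,m)\neq (0,0)$, we have $0^n 0^m = 0$, so $\int u^n v^m\, \nu(\dd u, \dd v) = p$ for all such $(n,m)$, and $=1$ for $(n,m)=(0,0)$. Thus the moments of $\nu$ match the limiting moments of $(X_t,Y_t)$ exactly.

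Finally I would pass from moment convergence to weak convergence. Because $[0,1]^2$ is compact, the family $\{\mathcal{L}_{(x,y)}(X_t,Y_t)\}_{t\geq 0}$ is automatically tight, so along any sequence $t_k \to \infty$ there is a weakly convergent subsequence with some limit $\nu_\infty$. As the monomials $(u,v)\mapsto u^n v^m$ are bounded and continuous, weak convergence forces $\int u^n v^m\, \nu_\infty(\dd u,\dd v)$ to equal the corresponding limiting moment, so the moments of $\nu_\infty$ coincide with those of $\nu$. On the compact set $[0,1]^2$ the moment problem is determinate, since polynomials are dense in $C([0,1]^2)$ by Stone--Weierstrass and a probability measure on a compact metric space is determined by its integrals against continuous functions; hence $\nu_\infty = \nu$. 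As every subsequential weak limit equals $\nu$ and the family is tight, the whole family converges weakly to $\nu$, which is the claimed statement \eqref{eq:momentconvergence}. The only genuine subtlety is this last passage from moments to laws; the remaining steps are immediate consequences of Proposition \ref{prop:moment_limit} and the structure of the dual.
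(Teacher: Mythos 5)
Your proof is correct and follows essentially the same route as the paper: convergence of all mixed moments (Proposition \ref{prop:moment_limit}) plus determinacy of the moment problem on the compact set $[0,1]^2$ via Stone--Weierstrass yields convergence in law to the two-point mixture. You merely spell out two details the paper leaves implicit -- the tightness/subsequence argument (which the paper delegates to Theorem 3.3.1 of \cite{EK86}) and the extension of the moment limit to the marginal cases $(n,0)$ and $(0,m)$ -- both of which are correct and harmless refinements.
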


\begin{proof}
It is easy to see that the only two-dimensional distribution on $[0,1]^2$, for which all moments are constant equal to $\frac{(\bc K+\bz)x+(c+z)y}{c+\bc K+z+\bz}$, is given by
$$
\frac{(\bc K+\bz)x+(c+z)y}{c+\bc K+z+\bz} \delta_{(1,1)} + 
\Big(1-\frac{(\bc K+\bz)x+(c+z)y}{c+\bc K+z+\bz}\Big)\delta_{(0,0)}.
$$
Indeed, uniqueness follows from the moment problem, which is uniquely solvable on $[0,1]^2.$ Convergence in law follows from convergence of all moments due to Theorem 3.3.1 in \cite{EK86} and the Stone-Weierstra\ss\ Theorem.
\end{proof}

\section{The seed bank coalescent with simultaneous switching}
\label{sn:seedbank_coalescent}
We now analyse the backward in time process in more detail. First, we give a formal construction of the the seed bank coalescent with simultaneous switching in terms of marked partitions. For $k \ge 1$, let $\mathcal{P}_k$ be the set of partitions of $\{1,..,k\}$. For $\pi \in \mathcal{P}_k$ let $|\pi|$ be the number of blocks of the partition $\pi.$ We define the space of \emph{marked} partitions to be 
\begin{align}\label{eq:markedpartitions}
\mathcal{P}^{\{a,d\}}_k=\Big\{ (\zeta, \vec{u}) \mid \zeta\in \mathcal{P}_k, \vec{u} \in \{a,d\}^{|\zeta|}\Big\}.
\end{align}
This enables us to attach to each partition block a flag which can be either `active' or `dormant' ($a$ or $d$), so that we can trace whether an ancestral line is currently in the active or dormant part of the population.

Consider two marked partitions $\pi,  \pi^\prime \in \mathcal{P}_k^{\{a,p\}}$, we write $\pi\succ \pi^\prime$ if $\pi^\prime$ can be constructed by merging exactly 2 blocks of $\pi$ carrying the $a$-flag, and the resulting block in ${\bf \pi}^\prime$ obtained from the merging both again carries an $a$-flag.

We use the notation ${\bf \pi}\Join_k {\bf \pi}^\prime$ if ${\bf \pi}^\prime$ can be constructed by changing the flag of precisely $k$ blocks of $\pi$ from $a$ to $d,$ and ${\bf \pi}\Join^l {\bf \pi}^\prime$ if ${\bf \pi}^\prime$ can be constructed by changing the flag of precisely $l$ blocks of $\pi$ from $d$ to $a$.

\begin{defn}[The seed bank $k$-coalescent with simultaneous switching]
\label{defn:k_seed bank_coalescent}
Fix $c,\bc, K,\in (0,\infty)$ and finite measures $\Lambda,\overline{\Lambda}$ on $[0,1]$ such that $\Lambda(\{0\})=\overline{\Lambda}(\{0\})=0.$ For $k\geq 1$ we define the \emph{seed bank $k$-coalescent with simultaneous switching} $(\Pi^{(k)}_t)_{t \ge 0}$ to be the continuous time Markov chain with values in $\mathcal{P}_k^{\{a,d\}}$, characterised by the following transitions:  
\begin{align}
\label{eq:coalescent_transitions}
&{\bf \pi} \mapsto {\bf \pi}^\prime \text{ at rate } \begin{cases}
                                         1  & \text{ if  } \pi \succ \pi'
                                         \\
                                         c+\int_{[0,1]}z(1-z)^{|\pi|}\frac{\Lambda(\dd z) }{z} & \text{ if } \pi\Join_1 \pi' \\
                                         \bc K+\int_{[0,1]}z(1-z)^{|\pi|}\frac{\bLambda(\dd z)}{z} & \text{ if } \pi\Join^1 \pi'\\
                                         \int_{[0,1]}z^k(1-z)^{|\pi|-k} \frac{\Lambda(\dd z)}{z} & \text{ if } \pi\Join_k \pi', 2\leq k\leq |\pi|, \\
                                         \int_{[0,1]}z^l(1-z)^{|\pi|-l}\frac{\bLambda(\dd z)}{z}  & \text{ if } \pi\Join^l \pi', 2\leq l\leq |\pi|.\\
                                        \end{cases}
\end{align}
\end{defn}

\begin{rem}\label{rem:block-coal} Clearly, the block counting process of this coalescent is the same as in Definition \ref{Block}, cf.\ Figure \ref{fig:coal}. Observe the relations $\mu=z^{-1}\Lambda$ and $\bmu=z^{-1}\bLambda$, see also remark \ref{rem:Lambdas}. In this section we will work with $\Lambda$ instead of $\mu,$ which is more convenient in the proofs of the main results.
\end{rem}

\begin{figure}
\begin{center}
\includegraphics[scale=0.32
]{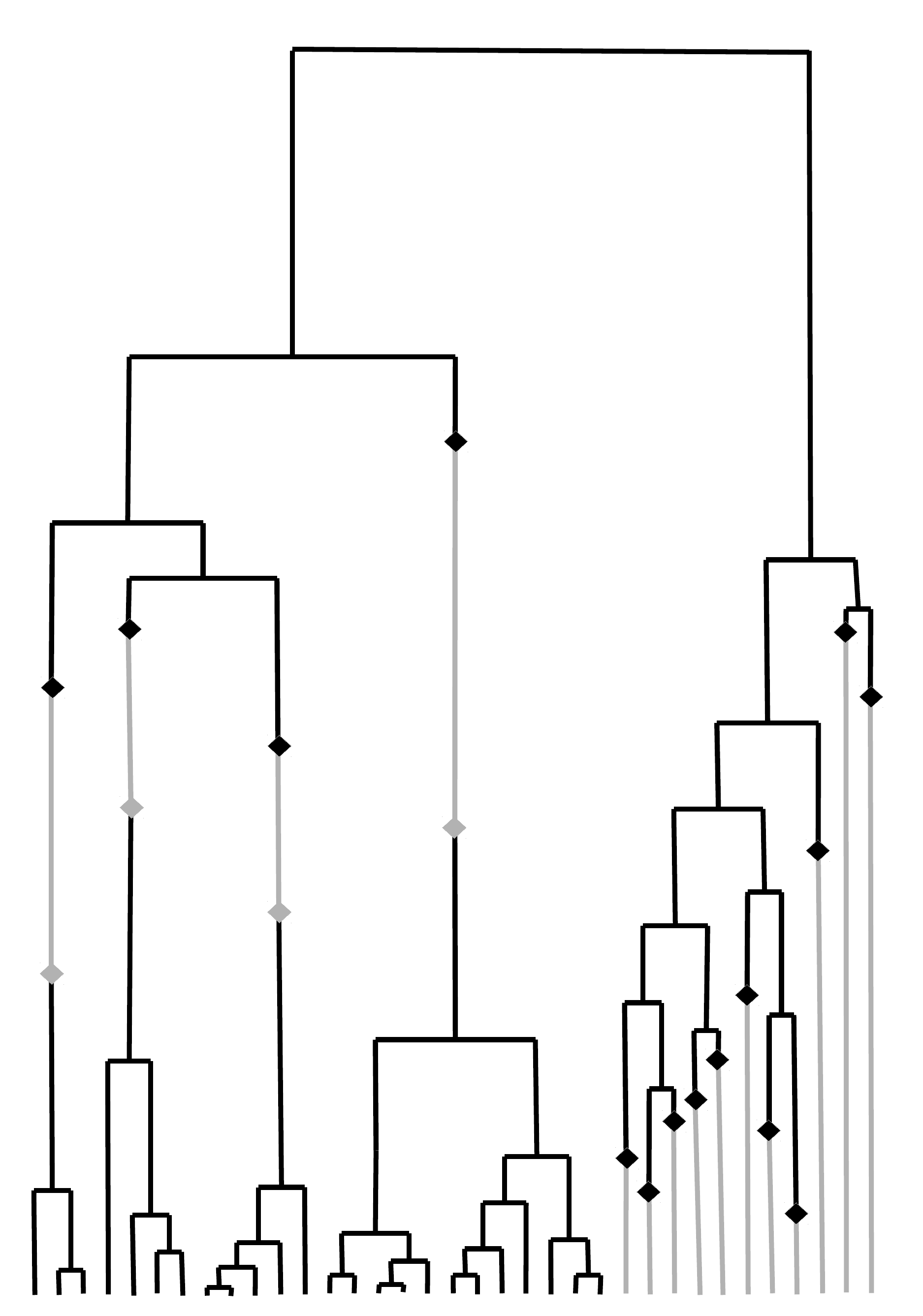}
\hspace{0.5cm}
\includegraphics[scale=0.32
]{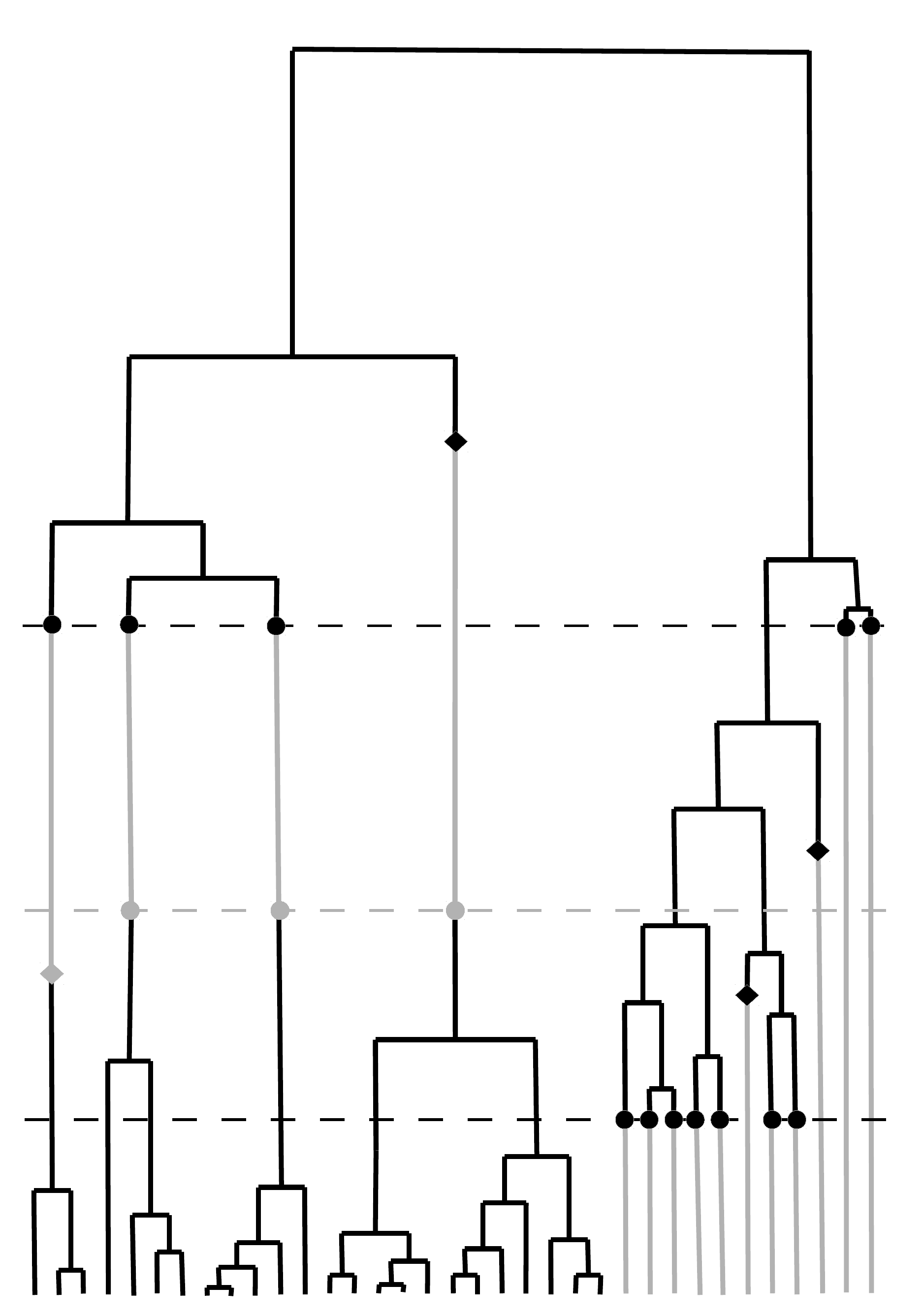}
\caption{Seedbank coalescent with spontaneous switching (left) and with both spontaneous and simultaneous switching (right). Black lines are active, grey lines dormant, diamons indindicate small migration events, dashed lines large migration events.}
\label{fig:coal}
\end{center}
\end{figure}

\begin{defn}[The seed bank coalescent with simultaneous switching]
\label{defn:projective_limit}
We define the \emph{seed bank coalescent with large migration events}, $(\Pi_t)_{t \ge 0}=(\Pi^{(\infty)}_t)_{t \ge 0}$ with intensities $c, \bc$, relative seed bank size $1/K$ and migration measures $\Lambda, \bLambda$ as the unique Markov process obtained as the projective limit as $k$ goes to infinity of the laws of the seed bank $k$-coalescents with simultaneous switching.
\end{defn}

Proving the existence of $(\Pi^{(\infty)}_t)_{t \ge 0}$ via projective limits is standard (the only slightly tedious piece of work is to show that the Markov property is retained under taking projections), which we therefore omit. Note that we are thus allowed to start the block counting process $(N_t,M_t)_{t\geq 0}$ in any state $(n,m),n,m\in\N_0\cup\{\infty\}.$

By entirely similar arguments as the ones presented in Section 3.1 of \cite{BGCKW16} one sees easily that the seed bank coalescent with simultaneous switching is indeed the ancestral process of the seed bank model with simultaneous switching.

For convenience, we give a different construction of the seed bank coalescent with simultaneous switching which will facilitate rigorous proofs in the following section as it allows e.g.\ for simple but precise constructions of couplings. For this construction, we introduce a family (or families) of Poisson point processes (PPP)
\begin{defn}
\label{def:PPPcoalescent}
Let for $c,\bc\in (0,\infty),$ and $\Lambda, \bLambda$ finite measures on $[0,1]$ with $\Lambda(\{0\})=\bLambda(\{0\})=0$
\begin{itemize}
 \item $R_{i_1, i_2}$, $i_1, i_2 \in \N_0$ be a family of PPP on $\R$ with intensity $1\lambda(\dd t)$,
 \item $R_i$, $i \in \N_0$ be a family of PPP on $\R$ with intensity $c\lambda(\dd t)$,
 \item $\bar R_i$, $i \in \N_0$ be a family of PPP on $\R$ with intensity $\bar c\lambda(\dd t)$,
 \item $R_{\Lambda}$ be a PPP on $\R\times[0,1]$ with intensity $\lambda(\dd t)\otimes\frac{\Lambda(\dd z)}{z}$ and
 \item $R_{\bar\Lambda}$ be a PPP on $\R\times[0,1]$ with intensity $\lambda(\dd t)\otimes \frac{\bLambda(\dd z)}{z}$.
\end{itemize}
Here, $\lambda$ denotes the Lebesgue-measure on $\R$.
\end{defn}

\begin{rem}\label{rem:singularity}
Note that we require that the order of the singularity at zero of the intensity measure $\frac{\Lambda(\dd z)}{z}$ is at most of order $z$. This may at first glance look surprising, since this is different from the condition on the singularity for the intensity measure driving a classical $\Lambda$-coalescent, which is of order $z^2$ \cite{P99}. However, a similar condition has been identified in the context of spatial $\Lambda$-Fleming-Viot processes (see \cite{BEV10}), where the authors use a Poisson process with an intensity with a singularity of order $z$ at zero to model large-scale extinction and recolonisation events. Note that one can interpret these events as simultaneous migration of ancestral lines. The point is that we observe singularities of order $z$ every time that we model ``actions'' of a single ancestral line (such as migration, mutation, selective events/branching), and of order $z^2$ in the case of ``actions'' that require more than one ancestral line (such as coalescence). This is a straightforward consequence of interpreting the restriction on the order of the singularity as a minimal condition for the total rate of the dual process to be finite. 
\end{rem}

From these objects, the following characterisation of the seed bank coalescent with simultaneous switching is evident:

\begin{prop}\label{prop:poisson}[Poisson-Point-representation of the coalescent]
Let $\mathtt{p} \in \mathcal{P}^{\{a,d\}}$ the space of marked partitions defined in \eqref{eq:markedpartitions}. The seed bank coalescent with simultaneous switching $\Pi$ is a function of the PPPs given above in the following way: Set $\Pi(0):=\mathtt{p}$. If $t \in \R$ is a (random) time-point in
\begin{itemize}
 \item $R_{i_1, i_2}$: If $i_1$ and $i_2$ are the smallest integers in their respective blocks and both blocks have an $a$-flag in $\Pi(t-)$, then $\Pi(t)$ is the partition where these two blocks are merged and all other blocks remain the same. Otherwise $\Pi(t)=\Pi(t-)$. 
 \item $R_i$: If $i$ is the smallest integer in its block and this has an $a$-flag in $\Pi(t-)$, then $\Pi(t)$ is the partition where this block has a $d$-flag and all other blocks remain the same. 
 \item $\bar R_i$: If $i$ is  the smallest integer in its block and this has a $d$-flag in $\Pi(t-)$, then $\Pi(t)$ is the partition where this block has an $a$-flag and all other blocks remain the same. 
 \end{itemize}
 
 If $(t,z)$ is a point in
 \begin{itemize}
 \item $R_{\Lambda}$, let $\mathbf{u}=({u}_i)_{i\in\N_0}$ be a sequence of independent uniform random variables on $[0,1],$ chosen independently of everything else and independent for each time point. Then $\Pi(t)$ is the partition where all the blocks that had an $a$-flag in $\Pi(t-)$ and whose smallest integer $i^*$ fulfilled ${u}_{i^*} \leq z$ have a $d$-flag while all others remain unchanged. 
 \item $R_{\bar\Lambda}$, let $\mathbf{u}=({u}_i)_{i\in\N_0}$ be a sequence of independent uniform random variables on $[0,1],$ chosen independently of everything else and independent for each time point. Then $\Pi(t)$ is the partition where all the blocks that had a $d$-flag in $\Pi(t-)$ and whose smallest integer $i^*$ fulfilled ${u}_{i^*} \leq z$ have an $a$-flag while all others remain unchanged. 
\end{itemize}
\end{prop}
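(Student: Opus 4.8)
The plan is to exhibit $\Pi$ as a well-defined c\`adl\`ag pure-jump Markov process and to read off its transition rates, verifying that they agree with those in \eqref{eq:coalescent_transitions}; the infinite object of Definition~\ref{defn:projective_limit} is then recovered because the recipe acts on a partition only through the least elements of its blocks, which makes the construction sampling-consistent. Accordingly I would fix $k$ and analyse the restriction $\Pi^{(k)}$ of the construction to $\{1,\dots,k\}$, a process carrying at most $k$ blocks and $k$ flags at any time.

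The first, and main, point is well-definedness. The families $R_{i_1,i_2}$, $R_i$ and $\bar R_i$, restricted to the finitely many indices relevant for $\{1,\dots,k\}$, are Poisson processes of finite total rate and hence contribute only finitely many atoms on any bounded time interval. The delicate family is $R_{\Lambda}$ (and symmetrically $R_{\bar\Lambda}$), whose intensity $\Lambda(\dd z)/z$ may carry infinite total mass because of the singularity at $0$. However, an atom $(t,z)$ of $R_{\Lambda}$ alters $\Pi(t-)$ only if at least one of the $a\le k$ currently active blocks is selected, an event of probability $1-(1-z)^{a}\le az\le kz$. By the thinning theorem the atoms that genuinely act form a Poisson process whose total rate is bounded by
\[
\int_{[0,1]}\bigl(1-(1-z)^{k}\bigr)\frac{\Lambda(\dd z)}{z}\ \le\ k\int_{[0,1]}\Lambda(\dd z)=k\,\Lambda([0,1])<\infty ,
\]
using $1-(1-z)^{k}\le kz$ and the finiteness of $\Lambda$. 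Thus only finitely many atoms act on any bounded interval and $\Pi^{(k)}$ is a well-defined c\`adl\`ag pure-jump process. This is precisely where the order-$z$ singularity condition discussed in Remark~\ref{rem:singularity} enters, and I expect it to be the main obstacle.

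Since the driving point processes have stationary, independent increments and the auxiliary uniforms are resampled independently at each atom, $\Pi^{(k)}$ is a time-homogeneous strong Markov chain on the finite state space $\mathcal{P}_k^{\{a,d\}}$, so it suffices to identify its jump rates. A merger $\pi\succ\pi'$ of two active blocks is triggered exactly by the Poisson process $R_{i_1,i_2}$ attached to the (unordered) pair of least elements $i_1,i_2$, hence occurs at rate $1$. A single deactivation $\pi\Join_1\pi'$ of the active block with least element $i$ happens either through $R_i$, at rate $c$, or through an atom $(t,z)$ of $R_{\Lambda}$ whose uniforms select that block and none of the other $a-1$ active blocks; thinning over the active blocks gives the additional rate $\int_{[0,1]}z(1-z)^{a-1}\frac{\Lambda(\dd z)}{z}$. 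A simultaneous deactivation $\pi\Join_k\pi'$ with $k\ge2$ cannot be produced by the single-index processes and arises only when $R_{\Lambda}$ selects precisely the prescribed $k$ active blocks, at rate $\int_{[0,1]}z^{k}(1-z)^{a-k}\frac{\Lambda(\dd z)}{z}$; the symmetric computations with $\bar R_i$ and $R_{\bar\Lambda}$ yield the activation rates. These are exactly the rates of \eqref{eq:coalescent_transitions} (equivalently, the block-counting rates of Definition~\ref{Block}), the exponent in each case recording the active, respectively dormant, blocks that do not participate.

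Finally, I would observe that deleting the labels $\{k+1,k+2,\dots\}$ commutes with the dynamics: every rule refers to a block solely through its least element, and the least element of a block that meets $\{1,\dots,k\}$ is unchanged by restriction, so the restriction of the construction on $\{1,\dots,k'\}$ to $\{1,\dots,k\}$ (for $k<k'$) coincides with the construction on $\{1,\dots,k\}$. Hence $(\Pi^{(k)})_{k\ge1}$ is a consistent family with the laws prescribed by \eqref{eq:coalescent_transitions}, and its projective limit, realised on the single probability space carrying the PPPs, is the seed bank coalescent with simultaneous switching of Definition~\ref{defn:projective_limit}, as claimed.
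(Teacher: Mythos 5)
Your argument is correct and is exactly the verification the paper leaves implicit (the paper states this proposition as ``evident'' and gives no proof): you correctly isolate the one nontrivial point, namely that $\Lambda(\dd z)/z$ may be infinite, and dispose of it by thinning with $1-(1-z)^k\le k z$ and $\Lambda([0,1])<\infty$, after which the rate identification and the least-element consistency argument are routine. The rates you derive agree with Definition~\ref{Block}; note only that the exponents printed in \eqref{eq:coalescent_transitions} use $|\pi|$ where your (correct) reading is the number of non-participating active, respectively dormant, blocks.
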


\subsection{Coming down from infinity}
\label{ssn:coming_down}
The notion of \emph{coming down from infinity} for exchangeable coalescents was introduced by Pitman \cite{P99} and Schweinsberg \cite{S00}. 
They say that the block-counting process $(N_t)_{t \ge 0}$ of a coalescent ``comes down from infinity'', if $N_0=\infty$ $\P$-a.s. and
$$
\P(N_t < \infty)=1 \quad \mbox{ for all } t>0.
$$
They further say that the process ``stays infinite'', if $\P(N_t=\infty)=1$ for all $t\geq 0.$ Note that this leaves intermediate regimes: For example, the ``star-shaped coalescent'' with rates driven by $\delta_1$ has infinitely-many lines until an exp(1)-distributed random time, by which it jumps to a single line only. It thus {\em does} come down from infinity in a certain sense, but only after a strictly positive (random) time. Hence one might want to distinguish between ``coming down from infinity instantaneously'' (Pitman's original definition), ``coming down from infinity after a finite time'', and ``staying infinite''. We mention this because our results regarding the seed bank coalescent with simultaneous switching exhibits all three regimes, as we will see below.

In \cite{BGCKW16} it was proved that the seed bank coalescent does not come down from infinity (neither instantaneously nor after a finite time), due to the fact that even within a very short time, infinitely many lines escape to the seed bank, from where it takes long to come back and be able to coalesce. It turns out that in the case of simultaneous switching, there is a qualitatively different behaviour. 

\begin{thm}[(Not) coming down from infinity]
\label{thm:MRcomingdown}
Assume model B. Let $Y$ be a random variable with distribution $\frac{1}{\Lambda([0,1])}\Lambda$. 
\begin{itemize}
\item[(a)] If $\bLambda(\{1\})=0$, then the process started in $(n,\infty), n\in \N_0\cup\{\infty\}$ will stay infinite.
\item[(b)] If the process is started in $(\infty, m), m\in\N_0,$ then the process comes down from infinity instantaneously if $\E[-\log(Y)] < \infty$ and $c=0$. If $\E[-\log(Y)] =\infty$ or $c>0,$ it stays infinite.
\item[(c)] If $\bLambda(\{1\})>0, c=0$ and $\E[-\log(Y)] < \infty,$ then the process started from $(n,\infty), n\in\N_0\cup \{\infty\}$ comes down from infinity after a finite time, but not instantaneously. 
\end{itemize} 
\end{thm}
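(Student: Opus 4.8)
\emph{Strategy.} The plan is to use the atom of $\bLambda$ at $1$ to generate \emph{reset events}, at which every dormant block is simultaneously activated, and to combine these with the instantaneous coming down already established in part~(b). Since $\bLambda(\{1\})>0$, the points of the driving process $R_{\bLambda}$ (cf.\ Proposition~\ref{prop:poisson}) with $z=1$ form a Poisson process of finite positive rate $\bLambda(\{1\})$; let $\tau$ be its first point after time $0$, so that $\tau$ is exponentially distributed with parameter $\bLambda(\{1\})$ and hence $0<\tau<\infty$ almost surely. Because the restriction of a Poisson point process to disjoint sets yields independent processes, $\tau$ is independent of all non-reset events (points with $z<1$) and of the spontaneous clocks.

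\emph{Not instantaneous.} I would show that the seed bank stays infinite strictly before the first reset, i.e.\ $M_t=\infty$ for all $t<\tau$; since $M_t=\infty$ forces $N_t+M_t=\infty$ and $\tau>0$ a.s., this already rules out instantaneous coming down. Using the Poisson construction, I would track $\tilde M_t$, the number of the initially infinitely many dormant blocks that have \emph{never} been activated up to time $t$; this quantity is non-increasing, and $M_t\ge \tilde M_t$. Conditionally on the non-reset dormant-to-active events before $\tau$, occurring at times $t_1<t_2<\cdots$ with parameters $z_1,z_2,\dots\in[0,1)$, each dormant block stays unactivated up to time $t$ \emph{independently} (through its own spontaneous clock of rate $\bc K$ and its own participation variables), with probability $p_t=e^{-\bc K t}\prod_{t_j\le t}(1-z_j)$. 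The crucial point is that the activation mass of the non-reset events has finite rate, $\E\big[\sum_{t_j\le t}z_j\big]=t\,\bLambda([0,1))<\infty$, so $\sum_{t_j\le t}z_j<\infty$ and therefore $p_t>0$ almost surely. The Borel--Cantelli lemma for independent events then gives $\tilde M_t=\infty$, and taking $t$ along a sequence increasing to $\tau$ together with the monotonicity of $\tilde M$ upgrades this to $\tilde M_t=\infty$, hence $M_t=\infty$, for all $t<\tau$.

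\emph{Coming down after a finite time.} By the previous step $M_{\tau-}=\infty$, and a reset activates every dormant block, so at time $\tau$ the configuration is $(N_\tau,M_\tau)=(\infty,0)$ regardless of whether $N_{\tau-}$ is finite or infinite. Since $\tau$ is a stopping time, the strong Markov property lets me restart the process from $(\infty,0)$ at time $\tau$; as $c=0$ and $\E[-\log Y]<\infty$, part~(b) yields instantaneous coming down from $(\infty,0)$, so $N_t+M_t<\infty$ for every $t>\tau$. Because the total block number decreases only through coalescences (large events merely relabel blocks as active or dormant), this finiteness persists for all later times, and since $\tau<\infty$ a.s.\ the process comes down from infinity after the finite time $\tau$. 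Note that the initial number $n\in\N_0\cup\{\infty\}$ of active blocks is irrelevant: the dormant component alone forces $N_t+M_t=\infty$ on $(0,\tau)$, and the state at $\tau$ is $(\infty,0)$ in all cases.

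\emph{Main obstacle.} I expect the delicate step to be the ``not instantaneous'' claim, specifically controlling the partial dormant-to-active events, whose total rate $\int_{[0,1)}z^{-1}\bLambda(\dd z)$ may be infinite. The resolution is to measure their effect on a \emph{single} lineage rather than their number: the expected activation mass deposited per lineage has finite rate $\bLambda([0,1))$, which keeps the survival probability $p_t$ strictly positive and makes the Borel--Cantelli argument go through. Some additional care is needed to pass from ``$M_t=\infty$ for fixed $t$'' to ``for all $t<\tau$'' (handled by the monotone never-activated count $\tilde M$) and to invoke the strong Markov property cleanly at the stopping time $\tau$.
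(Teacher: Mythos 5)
Your proposal addresses only part (c) of the theorem, and that is the essential gap: parts (a) and (b) are invoked (``already established in part (b)'') but never proved. Part (b) is the technical core of the statement --- it is the only place where the criterion $\E[-\log(Y)]<\infty$ actually enters --- and the paper spends most of its effort there: it introduces an auxiliary process $(\tilde N^{(1)},\tilde M^{(1)})$ in which each large active-to-dormant event moves only the lowest-labelled participating block, couples it with Kingman's coalescent, computes the total migration rate from $n$ active blocks as $\gamma(n)=n\,\Lambda([0,1])\,\E[(1-W)^{n-1}]$ with $W=UY$, $U$ uniform on $[0,1]$ (a representation taken from Griffiths \cite{G14}), and then shows via Borel--Cantelli that only finitely many migration events occur along the way down if and only if $\sum_n n^{-1}\E[(1-W)^{n}]=\E[-\log(W)]<\infty$, which is equivalent to $\E[-\log(Y)]<\infty$. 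None of this appears in your write-up, so as a proof of the stated theorem the proposal is missing its most substantial component; part (a) in full generality (arbitrary $\bLambda$ with $\bLambda(\{1\})=0$, all times $t$ rather than $t<\tau$) is likewise only implicit.

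For what you do cover, the argument is correct and follows the same route as the paper's part (c): wait for the first atom of $R_{\bLambda}$ with $z=1$, which arrives at an exponential time $\tau$ of rate $\bLambda(\{1\})$; show the seed bank stays infinite on $[0,\tau)$; observe that the reset sends the state to $(\infty,0)$; and conclude by the strong Markov property and part (b), with monotonicity of the total block count giving persistence of finiteness. Your treatment of the ``stays infinite before $\tau$'' step is actually cleaner than the paper's: you bound the activation mass seen by a single dormant lineage, whose expected rate is $\int_{[0,1)}z\,\tfrac{\bLambda(\dd z)}{z}=\bLambda([0,1))<\infty$, so the conditional survival probability $p_t=e^{-\bc K t}\prod_{t_j\le t}(1-z_j)$ is a.s.\ positive and the second Borel--Cantelli lemma applies; the paper instead splits $\bLambda$ into its restrictions to $[0,1/2]$ and $(1/2,1)$, treats the first by a discretisation coupling and the second via finiteness of $z^{-1}\bLambda(\dd z)$ there, and superposes (Lemma \ref{lem:cdi}). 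One detail to make explicit: $\prod_{t_j\le t}(1-z_j)>0$ requires $\sum_{t_j\le t}\bigl(-\log(1-z_j)\bigr)<\infty$, which follows from $\sum_{t_j\le t}z_j<\infty$ only after noting that then $z_j\to 0$, so all but finitely many factors satisfy $-\log(1-z_j)\le 2z_j$, while the finitely many remaining factors are strictly positive because every $z_j<1$.
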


\begin{rem}
A finite measure $\Lambda$ on $[0,1]$ with $\E[-\log Y]=\infty$ is for example given by the measure which has density $1_{]0,1/2]} (\log x)^{-2}x^{-1}$ with respect to the Lebesgue measure, which has total mass $1/\log 2.$
\end{rem}

In order to prepare the proof of part (a), we formulate and prove the result for two special cases.

\begin{lemma}\label{lem:cdi}
\begin{itemize}
\item[(i)] If $z^{-1}\bLambda(\dd z)$ is a finite measure on $[0,1]$ and $\bLambda(\{1\})=0$, then the process started in $(n,\infty), n\in \N_0\cup\{\infty\}$ will stay infinite.
\item[(ii)] If there exists $0<\delta<1$ such that $\bLambda([\delta,1])=0$, then the process started in $(n,\infty), n\in \N_0\cup\{\infty\}$ will stay infinite.
\end{itemize}
\end{lemma}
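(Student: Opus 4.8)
The plan is to prove, for each fixed $t>0$, that $M_t=\infty$ almost surely when the process is started from $(n,\infty)$. Since the total block count $N_t+M_t$ is non-increasing in $t$ (only coalescences lower it, while both individual and simultaneous switching preserve it), controlling $M_t$ along a countable dense set of times and invoking monotonicity upgrades this to the statement that the process stays infinite. I would work with the Poisson construction of Proposition \ref{prop:poisson} (equivalently the independent-blocks picture of Remark \ref{rem:indblocks}), labelling the initially dormant lines $1,2,3,\dots$ and tracking, for each line $i$, the event
\[
A_i:=\{\text{line } i \text{ never becomes active during } [0,t]\}.
\]
A line satisfying $A_i$ stays dormant throughout $[0,t]$ and therefore can never have taken part in a coalescence (only active blocks merge), so it is still a distinct block at time $t$; distinct such lines lie in distinct blocks. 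Hence it suffices to show that infinitely many of the $A_i$ occur almost surely, and the finitely many initially active lines play no role.

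Next I would condition on the realisation of the driving PPP $R_{\bLambda}$ of Definition \ref{def:PPPcoalescent} governing the large dormant-to-active events, writing its points in $[0,t]\times[0,1]$ as $(s_j,z_j)$. A dormant line can activate only through its individual clock (rate $\bc K$, independent across lines) or by participating in a large event, where at event $j$ it participates with probability $z_j$ via an independent uniform mark, independently across lines and events. Consequently, conditionally on $R_{\bLambda}$ the events $A_i$ are independent and share the common value
\[
Q:=e^{-\bc K t}\prod_j(1-z_j).
\]
Provided $Q>0$ almost surely, $\sum_i \P(A_i\mid R_{\bLambda})=\infty$, so the conditional second Borel--Cantelli lemma gives $\sum_i \mathbf 1_{A_i}=\infty$ conditionally, hence unconditionally, almost surely. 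Thus $M_t=\infty$, and the entire lemma reduces to the single claim that $\prod_j(1-z_j)>0$ almost surely, i.e.\ $\sum_j -\log(1-z_j)<\infty$ almost surely.

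This is the only genuinely case-dependent step. Under (i) the measure $\bmu=z^{-1}\bLambda$ is finite, so $R_{\bLambda}$ has only finitely many points in $[0,t]\times[0,1]$, and since $\bmu(\{1\})=\bLambda(\{1\})=0$ forces $z_j<1$ for every $j$, a finite product of strictly positive factors is positive. Under (ii) the hypothesis $\bLambda([\delta,1])=0$ guarantees $z_j\le\delta<1$ for all $j$, so that $-\log(1-z_j)\le (1-\delta)^{-1}z_j$; summing and applying Campbell's formula gives
\[
\E\Big[\sum_j z_j\Big]=t\int_{[0,\delta)} z\,\bmu(\dd z)=t\,\bLambda([0,1])<\infty,
\]
whence $\sum_j z_j<\infty$ and therefore $\sum_j -\log(1-z_j)<\infty$ almost surely. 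In both cases $Q>0$ almost surely, completing the argument.

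I expect the main obstacle to be not the analysis but the bookkeeping that justifies the conditional independence of the $A_i$ and their common probability $Q$: one must argue carefully that, once the large-event PPP $R_{\bLambda}$ is fixed, the only remaining randomness deciding whether a given dormant line activates is its own individual switching clock together with the independent uniform participation marks, and that nothing else — the finitely many active lines, coalescences, or active-to-dormant moves — can reactivate a line that has stayed dormant. The estimate $\prod_j(1-z_j)>0$ is then comparatively routine, the only delicate point being that in case (ii) one cannot appeal to finiteness of $\bmu$ and must instead exploit $\delta<1$ to compare $-\log(1-z_j)$ with $z_j$ before using finiteness of $\bLambda$.
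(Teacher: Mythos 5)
Your proof is correct, and part (i) follows essentially the same route as the paper: condition on the (a.s.\ finitely many) atoms of $R_{\bLambda}$ in $[0,t]$, observe that the indicators of ``never activated'' are conditionally i.i.d.\ Bernoulli with a common conditional success probability that is a.s.\ strictly positive because each atom has $z_j<1$, and conclude by the second Borel--Cantelli lemma. For part (ii), however, you take a genuinely different and arguably cleaner path. The paper first proves the statement for discrete measures $\bLambda(\dd z)=z\sum_i a_i\delta_{\bz_i}(\dd z)$ by conditioning on the vector of atom counts $K_t^i$ and controlling $\sum_i K_t^i\log(1-\bz_i)$ via a Taylor expansion, and then handles general $\bLambda$ with $\bLambda([\delta,1])=0$ by coupling with a discretised measure $\hat\Lambda$ whose jumps are stochastically larger. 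You instead work directly with the (possibly infinite) point configuration of $R_{\bLambda}$, bound $-\log(1-z_j)\le (1-\delta)^{-1}z_j$ on the support $[0,\delta]$, and invoke Campbell's formula to get $\E\bigl[\sum_j z_j\bigr]=t\,\bLambda([0,1])<\infty$, hence $\prod_j(1-z_j)>0$ a.s. This avoids the discretisation and the coupling entirely and unifies the two cases under the single claim $\sum_j-\log(1-z_j)<\infty$ a.s.; what it costs is a little extra care in justifying the conditional independence and the infinite-product formula when $z^{-1}\bLambda$ is not finite, which you correctly flag as the bookkeeping burden. One cosmetic remark: the exponential prefactor should match the intensity of the individual dormant-to-active clocks in Definition \ref{def:PPPcoalescent} (the paper writes $e^{-\bc t}$ there); this does not affect the argument, since all that matters is that the factor is strictly positive.
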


\begin{proof}
Proof of (i). Let $R_{\bLambda}$ be defined as in Definition \ref{def:PPPcoalescent}. In order to turn this into a stochastic process, we consider the first component as time axis and represent the PPP by its (a.s.) finite collection of atoms $\{(t_i, \bar{Z}_{t_i})\}_{i \in \N}$ ordered in time-increasing fashion, which is possible due to the finiteness assumption on the measure $z^{-1}\bLambda(\dd z).$ We introduce its canonical filtration by letting
\be
\mathcal{F}_t:= \sigma \big\{ (t_i, \bar{Z}_{t_i}), 0 \le t_i \le t\big\}, \quad t \ge 0. 
\ee
Fix $t\geq 0.$ Denote by 
$$
i^*:= \max\{i \, : \, t_i \le t\}
$$
the index of the last atom of the PPP before time $t$ (which is of course random). Further, assume that the probability space $(\Omega, \mathcal{F},\P)$ on which the PPPs are defined is large enough to accommodate a doubly infinite sequence of independent uniform random variables $\mathbf{u}=(u_{i,j})_{i,j \in \N}$ independent of everything else. They are used to determine whether or not a block $j$ participates in the large migration event at time $t_i$ whose size is determined by $Z_{t_i},$ cf. Proposition \ref{prop:poisson}. 

We now assume $\bLambda(\{1\})=0$. Assign labels $j\in \N$ arbitrarily to the infinitely many dormant individuals. Let $B_j =1$ if the $j$th individual never left the seed bank until time $t$, and $B_j=0$ otherwise. Then,
\begin{align*}
\P^{n,\infty}(M_t=\infty)\geq \P^{n,\infty}\Big(\sum_{j=1}^\infty B_j=\infty\Big)
&=\P^{n,\infty}\Big(\limsup_{j\to\infty} B_j=1\Big)\\
&=\E\Big(\P^{n,\infty}\big(\limsup_{j\to\infty} B_j=1\big|\mathcal{F}_t\big)\Big).
\end{align*}
Note that conditionally on $\mathcal{F}_t$, the $\{B_j, j \in\N\}$ are independent Bernoulli random variables. By Borel Cantelli, we are done once we can show that
$$
\P_{n,\infty}(B_j=1|\mathcal{F}_t) >0, \quad \P\mbox{-a.s.}, 
$$
where the probability is random, but independent of the index $j$, hence there is a uniform (in $j$) random lower bound away from 0. 
We have, using measurability of the $\bar{Z}_{t_{i}, i \le i^*}$ wrt.\ $\mathcal{F}_t$ and independence of the $u_{i,j}$ from $\mathcal{F}_t$,
\begin{align*}
\P^{n,\infty}(B_j=1|\mathcal{F}_t) &= e^{-\bc t}\,  \P \Big[ {\bf 1}_{\{u_{1,j}>\bar{Z}_{t_{1}}\}} \cdots {\bf 1}_{\{u_{i^*,j}>\bar{Z}_{t_{i^*}}\}}\Big|\mathcal{F}_t\Big] \\
&= e^{-\bc t} \, \E\Big[ {\bf 1}_{[Z_{t_1},1]}(u_{1,j})\cdots {\bf 1}_{[Z_{t_{i^*}},1]}(u_{i^*,j})\Big] \quad \P\mbox{-a.s.,}
\end{align*}
where the expectation in the second line only acts on the $u_{i,j}, i,j\in \N$. We have $\P(u_{1,j}>{Z}_{t_{1}})>0$ since $\bLambda(\{0\})=0,$ and $\P(i^*< \infty)=1$ due to the assumption that $\bmu$ is finite. Thus the r.h.s above is strictly positve by independence, and the event
$$
\Big\{ \P^{(n,\infty)}(B_j=1|\mathcal{F}_t) >0\Big\} 
$$
has probability 1. 

\medskip

Proof of (ii). We first assume $\bLambda$ is a discrete measure, and proceed similarly to the proof of part (i). Let $\bLambda$ be of the form $\bLambda(\dd z)=z\sum_{i=1}^\infty a_i\delta_{\bz_i}(\dd z),$ with $1>\bz_1\geq \bz_2\geq...> 0, a_i\geq 0$ such that $\sum_{i=1}^\infty \bz_i a_i<\infty.$ Note that the latter condition ensures that $\bLambda$ is a finite measure, while $\sum_{i=1}^\infty a_i$ may be infinite. Fix $t\geq 0.$ As in the proof of $(i),$ let $B_i =1$ if the $i$th seed never left the seed bank until time $t$, and $B_i=0$ otherwise. Denote by $K_t^i$ the number of points of size $\bz_i$ up to time $t$ in the PPP $R_{\bLambda},$ and let $K_t=(K_t^2,K_t^3,...)$. Then again $(B_j)_{j\in\N}$ is a sequence of identically distributed Bernoulli random variables conditionally independent given $K_t$, and
$$\P^{n,\infty}({M}_t=\infty)\geq \P^{n,\infty}(\sum_{j=1}^\infty B_j=\infty)=\E[\P^{n,\infty}(\sum_{j=1}^\infty B_j=\infty\,|\, K_t)]. $$
Thus, as in part (i), by Borel Cantelli we are done if we prove that $\P(\P(B_1=1\,|\, K_t)>0)>0.$ We have
\begin{equation}\label{eq}
\P^{n,\infty}(B_1=1\,|\, K_t)=e^{-\bc t}\prod_{i=1}^\infty (1-\bz_i)^{K_t^i}=e^{-\bc t}\exp(\sum_{i=1}^\infty K_t^i\log(1-\bz_i))
\end{equation} 
Now observe that $\E[K_t^i]=ta_i$, and hence $
\E[\sum_{i=1}^\infty \bz_i^k K_t^i]=\sum_{i=1}^\infty \bz_i^k a_it<\infty$ for $k\geq 1$, since $\sum_{i=1}^\infty a_u\bz_i<\infty.$
By Taylor expansion, this implies $\E[\sum_{i=1}^\infty K_t^i\log(1-\bz_i)]<\infty$ and in particular $\P(\sum_{i=1}^\infty K_t^i\log(1-\bz_i)<\infty)=1.$ Thus \eqref{eq} implies that $\P(\P_{n,\infty}(B_1=1\,|\, K_t)>0)>0,$ and the process stays infinite.

For more general measure $\bLambda$ with $\bLambda([\delta,1])=0$ we use an easy coupling with a discretised measure. Set $\bz_1=\delta,$ and choose arbitrary points $\bz_i, i\geq 2$ such that $\bz_1\geq \bz_2\geq ... > 0$ and $\bz_i/\bz_{i+1}\leq 2$ for all $i\in\N$ (for example, one may choose $\bz_i=1/i, i\geq 2$). 

Let $\hat \Pi$ denote the process constructed using the same point processes from Definition \ref{def:PPPcoalescent} as $\Pi,$ but ignoring all events in $R_{i_1,i_2}, R_i$ and $R_\Lambda,$ and where $\bLambda$ is replaced by the measure $\hat{\Lambda}$ on $\{1/i: i\geq 2\}$ defined via $\hat{\Lambda}(\dd z)=z \sum_{i=1}^\infty a_i \delta_{\bz_i}(\dd z),$ where
 \[a_i:=\int_{\bz_{i+1}}^{\bz_i}\bmu(\dd z).\]
This measure can be interpreted as follows: Whenever a value drawn according to the measure $\bmu(\dd z)=z^{-1}\bLambda(\dd z)$ falls into the interval $[\bz_{i+1},\bz_i),$ then the measure $z^{-1}\hat{\Lambda}$ yields value $\bz_i.$ Thus a seed bank process with measure $\hat{\Lambda}$ makes jumps of larger size than for $\bLambda$ from the seed bank to the plant part. Let $(\hat{N}_t, \hat{M}_t)_{t\geq 0}$ denote the corresponding block counting process. By construction, the processes can be coupled such that $\hat{M}_t\leq M_t$ for all $t\geq 0.$ This implies that if $\hat{M}_t=\infty,$ also $M_t=\infty.$ We have
\[\sum_{i=1}^\infty a_i\bz_i=\sum_{i=1}^\infty \bz_i\int_{\bz_{i+1}}^{\bz_i}\frac{\bLambda(\dd z)}{z}\leq 2\int_0^{\bz_1}\bLambda(\dd z)<\infty.\]
Thus $\hat{M}$ stays infinite, as we have seen in the first case, and we are done.
\end{proof}

\begin{proof}[Proof of Theorem \ref{thm:MRcomingdown}]
 \textbf{Part $(a)$:} Fix $\bLambda$ with $\bLambda(\{1\})=0.$ Let $\bLambda_1:=\bLambda(\cdot \cap [0,1/2])$ and $\bLambda_1:=\bLambda(\cdot \cap (1/2,1))$. Then $z^{-1}\bLambda_2(\dd z)$ is a finite measure by our assumptions, and according to Lemma \ref{lem:cdi} (i) the process with jump measure $\bLambda_2$ stays infinite. Due to part (ii) of the Lemma, also the process with $\bLambda_1$ stays infinite. We have $\bLambda=\bLambda_1+\bLambda_2,$ and the support of the two measures don't intersect. By construction, the Poisson point process $R_{\bLambda}$ is the superposition of two PPPs $R_{\bLambda_1}$ and $R_{\bLambda_2}$ with intensity measures $\lambda(\dd t)\otimes \frac{\bLambda_1(\dd z)}{z}$ resp. $\lambda(\dd t)\otimes \frac{\bLambda_2(\dd z)}{z}$. Fix $t>0.$ Since $z^{-1}\bLambda_2(\dd z)$ is finite, we can order its time points $t_i\leq t$ increasingly. At $t_1-$ the process $M_t$ is infinite almost surely because only events of $R_{\bLambda_1}$ have happened before, and it stays infinite at $t_1$. Thus by the strong Markov property $M_{t_i}$ is almost surely infinite for every $t_i,$ and hence $\P(M_t=\infty)=1.$

 \textbf{Part $(b)$:} If in (b) there is $c>0,$ then it can be seen by following the proof of Theorem 4.1 in \cite{BGCKW16}, the process does not come down from infinity, since a sufficiently large number of blocks move immediately from active to dormant. Thus we are in the situation of (a), at least if $\bLambda(\{1\})=0.$ The case $\bLambda(\{1\})>0$ will be discussed at the end of this proof. Assume now $c=0$. We will consider auxiliary processes with helpful properties: Let $(\tilde N, \tilde M)$ be  the process with the same mechanism of coalescence and migration from active to dormant, given by $\Lambda$, but without any migration from dormant to active. For a formal definition of the process, use the construction of the seed bank coalescent via Poisson point processes provided in Proposition \ref{prop:poisson}, using the same $R_{i_1, i_2}$, $i_1, i_2 \in \N$ and $R_{\Lambda}$ as for the original process, but ignoring all other events. $(\tilde N, \tilde M)$ has the essential mechanism we want to analyze in this part. But, as we will discuss in remark \ref{rem:canUseLambdaAuxiliary} below, it is not yet the suitable object for calculations. We will instead work with the process $(\tilde N^{(1)}, \tilde M^{(1)}),$ which has the same transitions as $(\tilde N, \tilde M)$, but at any large migration event determined by the points of $R_{\Lambda}$ only one blocks moves to the seed bank. For a formal definition of this process, we use once more the Poisson construction: Let $(\tilde \Pi^{(1)})$ be the process using the exact same PPPs $R_{i_1, i_2}$, $i_1, i_2 \in \N$ and a slightly different mechanism for the events in $R_{\Lambda}$ (whilst still ignoring all other events/PPPs): If $(t,z) \in \R$ is a point in $R_{\Lambda}$ and $({u})$ a sequence of iid uniform random variables on $[0,1]$ as in Proposition \ref{prop:poisson}, then $\tilde\Pi^{(1)}(t)$ is the partition where the block with an $a$-flag containing the smallest integer $i^*$ among all the blocks with an $a$-flag in $\tilde\Pi^{(1)}(t-)$ fullfilling ${u}_{i^*} \leq z$ has a $d$-flag while all others remain unchanged. In other words, every time $R_{\Lambda}$ commands a (possibly large) migration from active to dormant, $(\tilde \Pi^{(1)})$ will only let the line with the smallest integer migrate. Therefore $(\tilde N^{(1)}, \tilde M^{(1)})$ its block counting process will only have jumps of size 1 at a frequency determined by $R_{\Lambda}$.
 
 In order to proceed, define the stopping times
 \begin{align*}
 \tilde\tau_n &:= \inf\{ t > 0 \mid \tilde N^{(1)}(t) \leq n\}, \; \text{ for } n \in \N_0.
\end{align*}
 As we saw above for $( N,  M)$, we can easily couple $(\tilde N^{(1)}, \tilde M^{(1)})$ to the \emph{Kingman-coalescent} $(\Pi^{\mathcal K})$ in a way that, if we denote by $(\mathcal K)$ the blockcounting process of the Kingman-coalescent, we have 
\begin{align*}
 \P^{\infty,m_0}\left(\forall \; t \geq 0: \; \tilde N^{(1)}(t) \leq \mathcal K(t)\right) = 1.
\end{align*}
This immediately implies 
\begin{align*}
 \E^{\infty,m_0}\left[\tilde\tau_n\right] \leq 2
\end{align*}
 for all  $n \in \N_0$, since the right-hand-side is the \emph{time to the most recent ancestor} in the Kingman-coalescent. Let $\tilde{\mathcal{M}}$ and $\tilde{\mathcal{M}}^{(1)}$ denote the \emph{total} number of lines that migrated from active to dormant in $(\tilde\Pi)$, resp. $(\tilde\Pi^{(1)})$, at any point in time and let $\mathtt{mig}(n)$ be the event that there was a migration (not a coalescence) at time $\tilde\tau_n$, $n \in \N_0$. A moment of thought reveals that the coupling between the two processes implies the estimates
 \begin{align}\label{eq:coupling_estimates}
  \sum_{n \in \N_0} \1_{\mathtt{mig}(n)} \quad  = \quad \tilde{\mathcal{M}}^{(1)} \quad \leq \quad \tilde{\mathcal{M}} \quad \leq \quad  \sum_{n \in \N_0} \underbrace{\tilde N^{(1)}(\tau_n-)}_{\;=\; n+1}\1_{\mathtt{mig}(n)}. 
 \end{align}
Hence, the number of lines that found its way into the seed bank depends on how many of the events $\mathtt{mig}(n)$ are realized. Observe that, since a migration and a colaescence result in jumps of the same size, the events $\mathtt{mig}(n)$, $n \in \N_0$ are actually independent and we can calculate their probability. 
Define $\gamma(n)$ as the rate at which \emph{any} migration event occurs given we have $n \in \N$ active (in $(\tilde \Pi^{(1)})$). Then
\begin{align}\label{eq:gammas}
\gamma(n)	& = \sum_{j=1}^{n}\int_{[0,1]}\binom{n}{j}z^j(1-z)^{n-j}\frac{\Lambda(\dd z)}{z}\notag\\
		& = \int_{[0,1]} 1-(1-z)^n\frac{\Lambda(\dd z)}{z\notag}\\
		& = \int_{[0,1]} \int_{[0,1]} - \frac{\dd}{\dd u}(1-uz)^ndu\frac{\Lambda(\dd z)}{z}\notag \\
		& = \int_{[0,1]} \int_{[0,1]} n(1-uz)^{n-1}\dd u\Lambda(\dd z)\notag\\
		& = n \Lambda[0,1] \E [(1-W)^{n-1}] 
\end{align}
where $W:=UY$ for a random variable $U$ uniformly distributed on $[0,1]$ independent of $Y\sim \Lambda([0,1])^{-1}\Lambda$. The last equality is inspired by a represantation in Theorem 2 of \cite{G14}. Therefore
\begin{align*}
\P^{\infty,m_0}(\mathtt{mig}(n-1))	& = \frac{\gamma(n)}{\binom{n}{2}+\gamma(n)}\\
					& =\frac{\Lambda[0,1]\E[(1-W)^{n-1}]}{\frac{n-1}{2}+\E[(1-W)^{n-1}]} = \frac{2\Lambda[0,1]}{n-1}\E[(1-W)^{n-1}] + O(n^{-2})
\end{align*}
for any $n\geq 2,$ which in turn implies
\begin{align*}
\sum_{n=1}^\infty \P^{\infty,m_0}(\mathtt{mig}(n))	& = \; \sum_{n=1}^\infty \frac{2\Lambda[0,1]}{n}\E\left[(1-W)^{n}\right] + \text{const}\\
					& = \; 2\Lambda[0,1] \E[-\log(W)] + \text{const}. 
\end{align*}

Borel-Cantelli gives that almost surely only finitely many of the events $\mathtt{mig}(n),$ for $n\in\N$ happen, and thus both sums in \eqref{eq:coupling_estimates} are finite, if and only if $\E[-\log(W)] < \infty$. Observe that $\E[-\log(W)] = \E[-\log(U)]+\E[-\log(Y)]$ is finite if and only if $\E[-\log(Y)]<\infty.$ Since Kingman's coalescent comes down from infinity instantaneously, we see that $(N_t, M_t)$ comes down from infinity immediately if $\E[-\log Y]<\infty.$ Otherwise, the process stays infinite, at least provided $\bLambda(\{1\})=0,$ since in that case by (a) infinitely many blocks migrating to the seed bank in an arbitrarily short time implies that the process stays infinite. If $\bLambda(\{1\})>0,$ there is a positive probability for all dormant blocks to become active at the same time. In that case, the process starts afresh from $(\infty, 0).$ By the strong Markov property, and the above proof, we will again have infinitely many blocks moving to the seed bank, and thus the process will stay infinite also in this case.

 \textbf{Part $(c)$:} As we just argued in the last lines of the proof of part (b), if $\bLambda(\{1\})>0,$ there is a positive probability for all dormant blocks moving to the active part at once. By Borel Cantelli, this event eventually happens with probability one, and thus by (b) the process comes down from infinity. However, the coming down only happens after the seed bank has been emptied, and not instantaneously.
  \end{proof}

 \begin{rem}\label{rem:canUseLambdaAuxiliary}
  One might think it easier (or more precise) to estimate the number $\tilde{\mathcal{M}}$ of lines that migrated from active to dormant in $(\tilde\Pi)$ with the help of $(\tilde N, \tilde M)$ directly. With the same idea we can define stopping times $\tau_n:=\inf\{t>0 \mid \tilde N(t) \leq n\}$. Since we have large jumps, these may coincide for several $n \in \N$ so one is tempted to define the random times $\tau^*_k:= \sup\{t < \tau_k \mid \tilde N(t-) \neq \tilde N(t)\} = \sup\{ t < \tau_k \mid \exists \; n \in \N_0: \; t = \tau_n\}$ as the actual jump times. There is a small difficulty in that we cannot bound the value of $\tilde N$ at any such time, but much worse is that we actually cannot calculate the probabilities of the events ``there is a migration at $\tau_n$'' or ``there is a migration at $\tau^*_k$'' as before. Indeed, both $\tau_n$ and $\tau^*_k$ contain information about the present and the future of the process and therefore the latter are not even stopping times.
 \end{rem}

\paragraph{Acknowledgements.} The authors acknowledge support by the DFG Priority Programme SPP 1590 ``Probabilistic Structures in Evolution'', grants no. BL 1105/5-1 and KU 2886/1-1. ACG was supported by grant no. UNAM PAPIIT IA100419. Part of this work was completed while AGC was a BMS Substitute Professor at TU Berlin supported by the Berlin Mathematical School.

\bibliography{Bib_WFdiffusion.bib}
\bibliographystyle{alpha}

\end{document}